\definecolor{myblue}{rgb}{0,0,0.9}
\definecolor{mygray}{rgb}{0.9,0.9,0.9}
\definecolor{mymauve}{rgb}{0.58,0,0.82}
\newcommand{\ALOOP}[1]{\ALC@it\algorithmicloop\ #1%
  \begin{ALC@loop}}
\newcommand{\ENDALOOP}{\end{ALC@loop}\ALC@it\algorithmicendloop}
\newtheorem{theorem}{\textbf{\emph{Theorem}}}
\newcommand{\main}{VPFedMF}
\definecolor{garrisonpink1}{rgb}{0.858, 0.188, 0.478}
\definecolor{captwan}{rgb}{0.056, 0.318, 0.958}
\journal{Knowledge-Based Systems}
\begin{document}

\begin{frontmatter}

\title{Towards Privacy-Preserving and Verifiable Federated Matrix Factorization}

\author[els,rng]{Xicheng Wan}
\ead{xicheng.wan@outlook.com}

\author[focal]{Yifeng Zheng\corref{cor1}}
\ead{yifeng.zheng@hit.edu.cn}

\author[rvt]{Qun Li}
\ead{120106222757@njust.edu.cn}


\author[rvt]{Anmin Fu}
\ead{fuam@njust.edu.cn}

\author[rvt]{Mang Su}
\ead{sumang@njust.edu.cn}

\author[rvt]{Yansong Gao}
\ead{yansong.gao@njust.edu.cn}


\cortext[cor1]{Corresponding author}

\address[els]{School of Automation,
Nanjing University of Science and Technology, Nanjing, JiangSu, China}
\address[rng]{Department of Information Engineering, The Chinese University of Hong Kong, Hong Kong, China}
\address[focal]{School of Computer Science and Technology,
Harbin Institute of Technology, Shenzhen, Guangdong, China.}
\address[rvt]{School of Computer Science and Engineering,
Nanjing University of Science and Technology, Nanjing, JiangSu, China}



\begin{abstract}
Recent years have witnessed the rapid growth of federated learning (FL), an emerging privacy-aware machine learning paradigm that allows collaborative learning over isolated datasets distributed across multiple participants. The salient feature of FL is that the participants can keep their private datasets local and only share model updates. Very recently, some research efforts have been initiated to explore the applicability of FL for matrix factorization (MF), a prevalent method used in modern recommendation systems and services. It has been shown that sharing the gradient updates in federated MF entails privacy risks on revealing users’ personal ratings, posing a demand for protecting the shared gradients. Prior art is limited in that they incur notable accuracy loss, or rely on heavy cryptosystem, with a weak threat model assumed. In this paper, we propose VPFedMF, a new design aimed at privacy-preserving and verifiable federated MF. VPFedMF provides guarantees on the confidentiality of individual gradient updates through lightweight and secure aggregation. Moreover, VPFedMF ambitiously and newly supports correctness verification of the aggregation results produced by the coordinating server in federated MF. Experiments on a real-world movie rating dataset demonstrate the practical performance of VPFedMF in terms of computation, communication, and accuracy.

\end{abstract}

\begin{keyword}
Matrix factorization, recommendation services, privacy, federated learning, verifiability
\end{keyword}

\end{frontmatter}

\section{Introduction}

Privacy-preserving machine learning has been gaining increasing attentions from both academia and industry (e.g., Google and WeBank) in recent years because of the increased user privacy awareness in society and enforcement of data privacy laws such as the General Data Protection Regulation (GDPR, effective in May 2018)~\cite{gdpr}, California Privacy Rights Act (CPRA, effective in Jan. 2021)~\cite{cpra}, and China Data Security Law (CDSL, effective in Sep. 2021) \cite{cdsl}. 
Federated learning (FL) is one of the most popular paradigms in recent years for providing privacy protection in machine learning~\cite{mcmahan2017communication,li2020federated,zhang2021survey,zheng22}, and has demonstrated applicability for various application scenarios ranging from resource-limited mobile devices~\cite{gao2020end} to resource-rich institutions, e.g., medical centers~\cite{xu2021federated}. 
In FL, the participants can keep their private datasets locally, yet are able to train a global model over the joint datasets \cite{li2021survey}. 
A centralized server coordinates the participants and aggregates their local model updates (instead of their raw private datasets) to iteratively update the global model.

The FL paradigm has seen successful applications in scenarios that deal with privacy-sensitive data.
For example, in financial systems like open banking \cite{long2020federated}, FL can be leveraged to identify malicious clients with act of loan swindling and escaping from paying for the debt without exposing all clients' financial information \cite{yang2019federated}. 
On the other hand, it is noted that most existing FL systems and services have mainly focused on deep neural networks \cite{nasr2019comprehensive,perifanis2022federated,wang2020optimizing}.
Very recently, only few research efforts have been initiated to explore the applicability of FL for matrix factorization \cite{koren2009matrix}, a prevalent method that has seen wide use in recommendation systems for rating prediction, item ranking, item recommendation, and more \cite{yu2017attributes,zhang2019probabilistic,yang2021fcmf}.
Generally, MF decomposes a user-item rating matrix into two latent representations or components: a user profile matrix and an item profile matrix, where a new prediction can be made with the combination of both matrices.

The conventional MF is performed in a centralized manner, which may easily cause violation of data privacy.
Indeed, user ratings contains private information such as user behavior, preferences and social status~\cite{kosinski2013private}. Therefore, it is imperative to protect user privacy in MF while making quality recommendations. There are efforts towards addressing this concern when the MF is trained in a centralized manner. Berlioz \textit{et al.} \cite{berlioz2015applying} propose to utilize differential privacy \cite{DworkMNS06} to obfuscate users' raw data for the sake of securing model results after training by a centralized model with a trade-off of accuracy loss. Some works \cite{nikolaenko2013privacy,kim2016efficient,bellare2012foundations} resort to cryptographic techniques (like powerful yet expensive homomorphic encryption and garbled circuits).
These works, however, still all fall within centralized training settings and lack scalability for practical deployment.

Until very recently, Chai \textit{et al.}~\cite{chai2020secure} initiate the study on how to bridge FL and MF, enabling MF to be conducted in a FL setting.
MF in a FL setting updates the user profile matrix only at the user side while aggregating gradient information and updating the item profile matrix at the server side.
This considers the fact that the user profile matrix encodes private preference information.
In this context, Chai \textit{et al.} analyze the privacy leakage in the context of federated MF and find that user rating information could still be leaked when the server can see and analyze the gradient information uploaded by the users.
As a solution, they apply additive homomorphic encryption (AHE) to protect the gradient information in aggregation and propose a design called FedMF.
Despite that FedMF neither requires raw datasets from users nor leaks the gradient information through the use of AHE, it incurs significant performance overheads.
%
Moreover, FedMF works under a relatively weak security model, and does not offer assurance on the computation integrity of aggregation against the server.

In light of the above, this work proposes VPFedMF, a new protocol for enabling privacy-preserving and verifiable matrix factorization.
VPFedMF protects the confidentiality of gradient information of individual users throughout the whole process of federated matrix factorization, through an advanced \textit{masking-based secure aggregation} technique with low overhead. 
In particular, in VPFedMF, users can provide encrypted gradient information through lightweight encryption, while the server is still able to perform aggregation of the encrypted gradient updates.
This is in substantial contrast to the state-of-the-art work \cite{chai2020secure} which relies on the usage of heavy homomorphic cryptosystem.
In the meantime, VPFedMF newly and ambitiously provides \textit{assurance on the integrity of aggregation} against the server, achieving much stronger security than \cite{chai2020secure}.
In particular, VPFedMF introduces a delicate verification mechanism that allows users to verify the correctness of the aggregation result received from the server in each iteration.
An adversarial server that does not correctly perform the aggregation would be detected. 
We highlight our contributions as follows.

\begin{itemize}
\item We present a new protocol VPFedMF, which provides cryptographic guarantees on the confidentiality of gradient information of individual users in federated matrix factorization, through masking-based lightweight and secure aggregation.

\item VPFedMF newly provides assurance on the integrity of aggregation against the server, under a stronger threat model that was overlooked by prior work.
Through a delicate cryptographic verification mechanism, VPFedMF allows user-side verification of the correctness of aggregation results produced by the server. 

\item We make an implementation of VPFedMF and perform a thorough performance evaluation on a real-world movie rating dataset MovieLens.
Compared with the state-of-the-art work FedMF \cite{chai2020secure}, VPFedMF is about  $20\times$ faster.
Experiments also validate that VPFedMF preserves the accuracy, matching that of plaintext-domain federated MF and conventional centralized MF.

\end{itemize}

The rest of the paper is organized as below. Section~\ref{sec:preliminary} provides necessary preliminaries. Section~\ref{sec:VPFedMF} elaborates on our system model, threat model, and the detailed construction, followed by the security analysis in Section~\ref{sec:security analysis}. Section~\ref{sec:experiment} provides the performance evaluation and comparison. Section~\ref{sec:conclusion} concludes the whole paper.

\section{Technical Preliminaries}\label{sec:preliminary}

This section provides preliminaries related to the construction of VPFedMF. We firstly introduce matrix factorization in a federated learning setting. Then we describe several cryptographic primitives to be used later. 

\subsection{Federated Matrix Factorization}

The MF \cite{koren2009matrix}, \cite{takacs2008investigation}, \cite{gemulla2011large} technique has been popularly used in recommendation systems. 
%
Given a sparse rating matrix $\mathbf{R}\in \mathbb{R}^{n\times m}$, MF aims to generate a user profile matrix $\mathbf{U}\in \mathbb{R}^{n\times d}$ and an item profile matrix $\mathbf{V}\in \mathbb{R}^{m\times d}$ with the same latent dimension $d$, where $n$ is the number of users and $m$ is the number of items.
The $i$-th row of $\mathbf{U}$ represents the profile of the $i$-th user $\mathcal{U}_i$, and the $k$-th row of $\mathbf{V}$ represents the profile of the $k$-th item $\mathcal{V}_k$.
Let $r_{i,k}$ denote the rating value generated by user $\mathcal{U}_i$ for item $\mathcal{V}_k$.
The resulting matrices $\mathbf{U}$ and $\mathbf{V}$ after training can then be used to generate predictions $r_{i,k}'$ for the rating values for all user/item pairs, i.e., $r_{i,k}'=\langle \mathbf{u}_i,\mathbf{v}_k \rangle$, where $\mathbf{u}_i\in \mathbb{R}^{d}$ is the profile vector for user $\mathcal{U}_i$ and $\mathbf{v}_k\in \mathbb{R}^{d}$ is the profile vector for item $\mathcal{V}_k$.

The computation of the user profile matrix $\mathbf{U}$ and item profile matrix $\mathbf{V}$ can be achieved by solving the following regularized least squares minimization problem:
\begin{equation*}\label{lossfunction}    
    \mathop{\arg\min}\limits_{\mathbf{U,V}}\frac{1}{M}\sum\nolimits_{(i,k) \in \Omega}(r_{i,k}-\langle \mathbf{u}_i,\mathbf{v}_k \rangle)^2+\lambda||\mathbf{U}||^2_2+\mu||\mathbf{V}||^2_2,
\end{equation*}
where $M$ is the total number of ratings, $ \Omega\subseteq\{1,2,\dots,n\}\times\{1,2,\dots,m\}$ is a set for indices pairs $(i,k)$ and $|\Omega| = M$. $\lambda$ and $\mu$ are small positive values in order to avoid overfitting.
To solve this optimization problem, the method of stochastic gradient descent (SGD) is usually applied, which iteratively updates $\mathbf{U}$ and $\mathbf{V}$ through the following rules in an iteration $t$:
\begin{equation*}
    \mathbf{u}_i^{t}=\mathbf{u}_i^{t-1}-\mathbf{H}_{i}^t\label{update1};
\end{equation*}
\begin{equation*}
    \mathbf{v}_k^{t}=\mathbf{v}_k^{t-1}-\mathbf{G}_{k}^t\label{update2},
\end{equation*}
where $\mathbf{H}_{i}^t$ and $\mathbf{G}_{k}^t$ are gradient vectors that are computed based on the current user profile matrix $\mathbf{U}^{t-1}$ and item profile matrix $\mathbf{V}^{t-1}$, as shown below:
\begin{equation*}\label{2}
    \mathbf{H}_{i}^t=\sum\nolimits_{k\in [1,m]}\gamma[-2\mathbf{v}_k^{t-1}(r_{i,k}-\langle \mathbf{u}_i^{t-1},\mathbf{v}_k^{t-1}\rangle)+2\lambda \mathbf{u}_i^{t-1}];
\end{equation*}
where $\gamma$ is also a small positive value to control the convergence speed. $\mathcal{U}_i$ generates the gradient vector $\mathbf{G}_{i, k}^t$ for each item $\mathcal{V}_k$:
\begin{equation*}\label{3}
    \mathbf{G}_{i, k}^t=\gamma[-2\mathbf{u}_i^{t-1}(r_{i,k}-\langle \mathbf{u}_i^{t-1},\mathbf{v}_k^{t-1}\rangle)+2\mu \mathbf{v}_k^{t-1}].
\end{equation*}
Then we have
\begin{equation*}\label{4}
    \mathbf{G}_{k}^t=\sum\nolimits_{i\in[1,n_k]} \mathbf{G}_{i, k}^t,
\end{equation*}
where $n_k$ is the number of users providing ratings for item $\mathcal{V}_k$.
Conventionally, MF is performed in a centralized setting where all the ratings are collected by a server for processing.
Recently, there have been research efforts on supporting MF in a distributed manner, particularly using the FL paradigm, for the purpose of reducing privacy risks by avoiding the exposure of raw rating values \cite{chai2020secure}. 
The process of federated MF is detailed in the Algorithm \ref{DistributedMF}. 
It is executed between a server and a set of users that hold their rating values locally. 
In each iteration $t$, the server sends the current item profile matrix $\mathbf{V}^{t-1}$ to all users.
Note that in the first iteration, the server initializes $\mathbf{V}^{0}$ and each user $\mathcal{U}_i$ generates its user vector $\mathbf{u}^0_i$.
Given $\mathbf{V}^{t-1}$, each user $\mathcal{U}_i$ computes the gradient vector $\mathbf{H}_{i}^t$, which is used to update the user vector $\mathbf{u}^{t}_i$.
Each user $\mathcal{U}_i$ then computes a gradient vector $\mathbf{G}_{i, k}^t$ for each item $\mathcal{V}_k$ based on its ratings and the vector $\mathbf{v}^{t-1}_k$ derived from $\mathbf{V}^{t-1}$.
Each user $\mathcal{U}_i$ uploads its gradient vector $\mathbf{G}_{i, k}^t$ to the server, which aggregates these gradient vectors and produces an aggregate gradient vector $\mathbf{G}_{k}^t=\displaystyle \sum\nolimits_{i\in[1,n_k]} {\mathbf{G}_{i, k}^t}$.
The aggregate gradient vector is used to update the item vector $\mathbf{v}^{t}_k$, through $\mathbf{v}^{t}_k=\mathbf{v}^{t-1}_k-\mathbf{G}_{k}^t$.

While performing MF under the federated learning par$\-$adigm avoids the sharing of raw ratings, the sharing of gradients has been shown to be subject to attacks which could infer the rating values, compromising the data privacy \cite{chai2020secure}.
Hence, it is necessary to offer protection on the shared gradients in FedMF.

\begin{algorithm}[!t]
\caption{Federated MF in the Plaintext Domain}
\label{DistributedMF}
\begin{algorithmic}[1]

\REQUIRE Initialized user vector $\mathbf{u}^0_i$ on the user side and item matrix $\mathbf{V}^0$ on the server side.
\ENSURE Trained user matrix $\mathbf{U}$ and item matrix $\mathbf{V}$.

\FOR{each iteration $t=1,2,\cdots$}

\STATE Users download latest item profile matrix $\mathbf{V}^{t-1}$ from the server.

\FOR{each user $\mathcal{U}_i$} 
    \STATE Compute gradient $\mathbf{H}_{i}^t$.
\STATE Compute $\mathbf{u}_i^{t}=\mathbf{u}_i^{t-1}-\mathbf{H}_{i}^t$.
\STATE Compute $\mathbf{G}_{i,k}^t$ for each item $\mathcal{V}_k$.
\STATE Send $\mathbf{G}_{i,k}^t$ to the server.
\ENDFOR

\STATE The server aggregates all $\mathbf{G}_{i,k}^t$ for each item $\mathcal{V}_k$ to produce $\mathbf{G}_{k}^t$. 

\STATE The server updates the item vectors: $\mathbf{v}^t_k =\mathbf{v}^{t-1}_k - \mathbf{G}_{k}^{t}$.

\ENDFOR

\end{algorithmic}

\end{algorithm}

\subsection{Homomorphic Hash Function}
Homomorphic hash function $\mathsf{HF}(\cdot)$ enables to compress a vector by computing a hash of the vector, while preserving the addition property \cite{bellare1994incremental}.
It is based on the hardness of the discrete logarithm in groups of prime order.
%
%
Let $\mathbb{G}$ denote a cyclic group of prime order $q$ with generator $g$, and $g_1,...,g_{d}$ represent distinct elements randomly chosen from $\mathbb{G}$.
Given a $d$-dimensional vector $\mathbf{x}$, which the $l$-th element is denoted by $x_l$, the homomorphic hash $h_{\mathbf{x}}$ of $\mathbf{x}$ is computed via $$h_{\mathbf{x}}=\textsf{HF}(\mathbf{x})=\prod\nolimits_{l \in [1,d]} g_l^{x_l}.$$

\subsection{Commitment}

A commitment scheme allows one to commit to a message ahead of time \cite{damgaard1998commitment}.
Later, the message is revealed, and the commitment can be used to check whether the revealed message is indeed the one committed in the beginning.
%
%
A secure commitment scheme guarantees that a message cannot be modified after being committed.
Besides, the commitment can hide the underlying committed message.
A commitment scheme proceeds in two phases: the commit phase and the decommit phase.
In the commit phase, a commitment for a message $\mathcal{M}$ is generated by $c = \textsf{Commit} (\mathcal{M};r)$, where $r$ is randomness.
In the decommit phase, a message $\mathcal{M}'$ is revealed, and a function $\textsf{DeCommit} (\mathcal{M}',c,r)$ is run to check whether $\mathcal{M}'$ is the message underlying the commitment $c$.
The function $\textsf{DeCommit}(\cdot)$ outputs $1$ which indicates successful verification or $0$ indicating the verification failure.

%

\section{VPFedMF}\label{sec:VPFedMF}

\subsection{Overview}


The overview of our proposed VPFedMF system framework is illustrated in Fig. \ref{fig:system_architecture}.
VPFedMF enables matrix factorization in a federated learning setting, while preventing privacy leakages from the gradients by aggregating gradients in the ciphertext domain via secure aggregation techniques.
In the meantime, it aims to enforce that the (secure) aggregation is correctly conducted by the server through the integration of a verification mechanism.
We elaborate on the design rationale as follows.

\begin{figure}[t!]
\centerline{\includegraphics[width=0.48\textwidth]{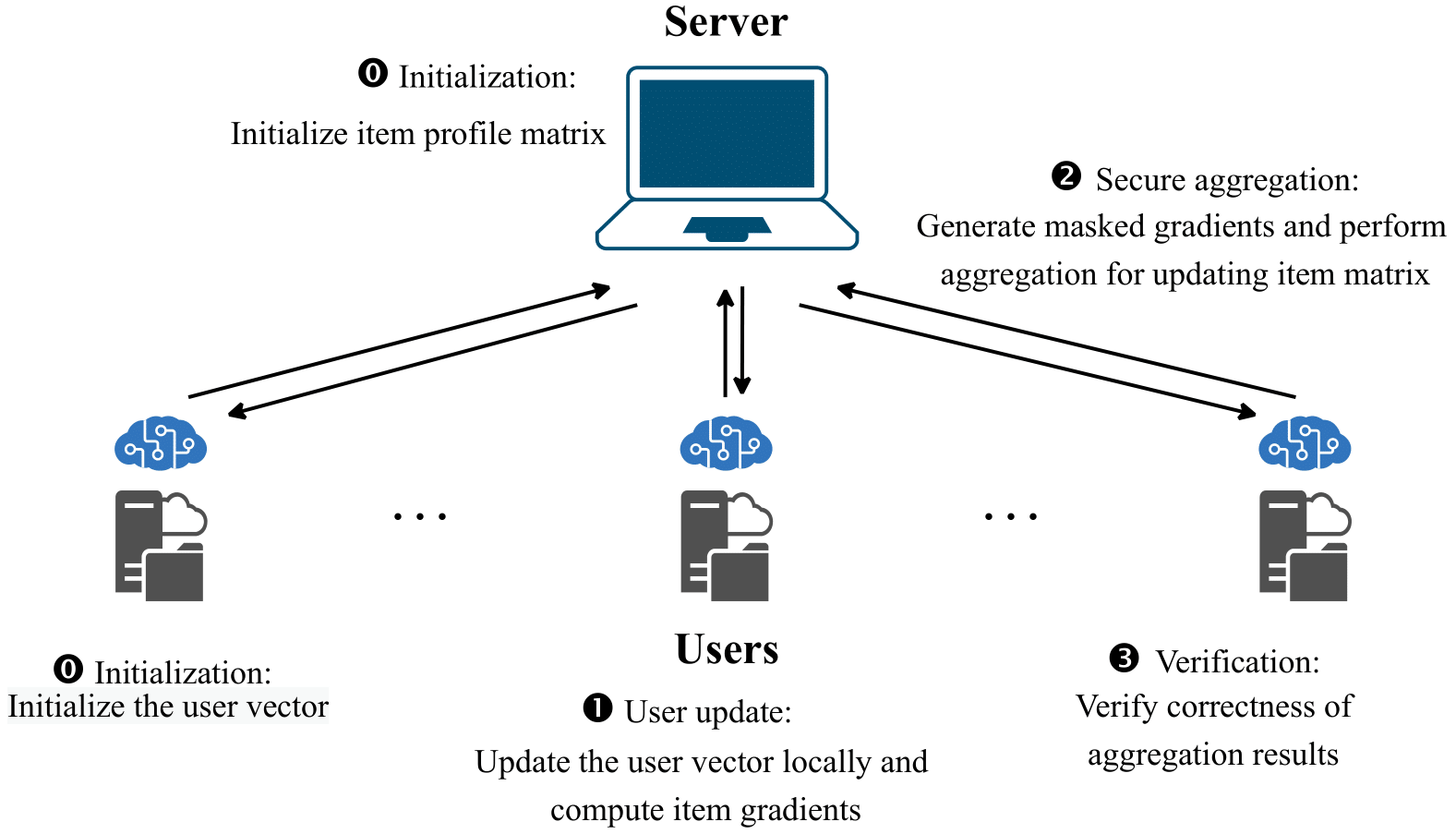}}
\caption{The system overview of our proposed VPFedMF design.}
\label{fig:system_architecture}
\end{figure}

Unlike the prior work \cite{chai2020secure} that relies on heavy homomorphic encryption for secure aggregation, VPFedMF resorts to a newly developed masking-based lightweight secure aggregation technique \cite{bonawitz2017practical} for encrypting each user's gradient vector while supporting aggregation of the encrypted gradient vectors.

Specifically, in VPFedMF, each user $\mathcal{U}_i$ will generate a tailored random masking vector for encrypting the gradient vector $\mathbf{G}_{i, k}^t$ for each item $\mathcal{V}_k$.
The random masking vector is generated based on each user's secret key and public keys of other users in the system.
And the generation process only requires the usage of a pseudo-random number generator and thus is fast compared with homomorphic encryption.
Once the random masking vector is generated, encryption is achieved via fast modulo addition.
In order to guarantee the integrity of aggregation result which could be potentially corrupted by the server, we take advantage of cryptographic techniques including homomorphic hash function and commitment to foster a verification mechanism in VPFedMF, inspired by the recent work \cite{guo2020v}.
Specifically, before sending the encrypted gradient vector to the server in an iteration, each user first commits to its gradient vector based on the homomorphic hash function and commitment scheme.
The commitments are sent to all other users in the system, which will be used later to verify the integrity of the aggregation result received from the server.
Based on the above insights, this paper presents the first design for verifiable and privacy-preserving federated matrix factorization.


\begin{figure*}[!t]
\centering

\fbox{
  \begin{minipage} [t]{0.99\textwidth}

\textbf{Initialization (Phase 0, only once in iteration $1$):}

\begin{enumerate}
\setcounter{enumi}{-1}

\item \textit{Key generation:}

User: Each $\mathcal{U}_i$ generates its private key $\textrm{msk}_i$ and public key $\textrm{mpk}_i$. $\mathcal{U}_i$ sends its public key $\textrm{mpk}_i$ to server.

Server: Server receives public key $\textrm{mpk}_i$ from $\mathcal{U}_i$ and broadcasts it to other $\mathcal{U}_j$.

User: Each $\mathcal{U}_i$ computes its shared key $\textrm{ck}_{i,j} = \textsf{KeyAgreement}(\textrm{msk}_i, \textrm{mpk}_j)$ with respect to another user $\mathcal{U}_j$.

\item \textit{Profile initialization:}

User: Each $\mathcal{U}_i$ initializes the user vector $\mathbf{u}_i^0$. 

Server: Server prepares for initial item profile matrix $\mathbf{V}^0$.

\end{enumerate}

\textbf{User Update (Phase 1):}

Each $\mathcal{U}_i$ receives the latest item profile matrix $\mathbf{V}^{t-1}$ from the server.
Then, $\mathcal{U}_i$ computes the user gradient vector $\mathbf{H}_{i}^t$ and the item gradient vector $\mathbf{G}_{i,k}^t$ for each item $\mathcal{V}_k$.
The gradient vector $\mathbf{H}_{i}^t$ is used to update locally the user profile vector $\mathbf{u}_i^{t-1}$ to $\mathbf{u}_i^{t}$, via $\mathbf{u}_i^{t}=\mathbf{u}_i^{t-1}-\mathbf{H}_{i}^t$. 
The gradient vector $\mathbf{G}_{i,k}^t$ for each item will enter the next secure aggregation phase.

\textbf{Secure Aggregation (Phase 2):}

\begin{enumerate}
\setcounter{enumi}{-1}
\item \textit{Making commitments:} 

User: Each $\mathcal{U}_i$ computes $h_{i,k}^t = \textsf{HF}(\displaystyle \frac{1}{n_k}\mathbf{v}_k^{t-1}-\mathbf{G}_{i,k}^t)$ and $c_{i,k}^t = \textsf{Commit}(h_{i,k}^t;r^t_{i,k})$ for item $\mathcal{V}_k$. $\mathcal{U}_i$ sends its commitment $c_{i,k}^t$ to the server.

Server: The server receives  $c_{i,k}^t$ from $\mathcal{U}_i$ and broadcasts it to other users $\mathcal{U}_j$. 

\item \textit{Masking gradient vectors:}

User: Each $\mathcal{U}_i$ expands $\textrm{ck}_{i,j}$ by applying a pseudo-random number generator (\textsf{PRNG}) and $\Delta$ to a $d$-dimensional vector for masking. 
In particular, each $\mathcal{U}_i$ computes  $\boldsymbol{\sigma}_{i,k}^t = (\displaystyle \frac{1}{n_k}\mathbf{v}_k^{t-1}-\mathbf{G}_{i,k}^t) + \sum\nolimits_{j\in [1,n_k] \backslash \{i\}} \Delta_{i,j}\textsf{PRNG}(\mathrm{ck}_{i,j}||k||t) \bmod B$ for item $\mathcal{V}_k$, where $\Delta_{i,j} = 1$ if $i<j$ and $\Delta_{i,j} = -1$ if $i>j$, and $B$ is a modulus defining the message space.
Each $\mathcal{U}_i$ sends $\boldsymbol{\sigma}_{i,k}^t$ to the server.

\item \textit{Aggregating masked gradient vectors:}

Server: The server receives $\boldsymbol{\sigma}_{i,k}^t$ from users, and computes the aggregation result $\mathbf{v}_k^t =\displaystyle \sum\nolimits_{i\in[1,n_k]} \boldsymbol{\sigma}_{i,k}^t \bmod B$ for all items, where it is derived that $\mathbf{v}_k^t=\mathbf{v}_k^{t-1}-\mathbf{G}_{k}^t$. The server broadcasts the aggregation result of each item (i.e., the updated $\mathbf{V}^t$) to all users. 
\end{enumerate}

\textbf{Verification (Phase 3):}
\begin{enumerate}
\setcounter{enumi}{-1}

\item \textit{Decommitting:}

User: Each $\mathcal{U}_i$ sends to the server its decommitment strings, i.e., hashes and corresponding randomnesses $\{h_{i,k}^t,r^t_{i,k}\}$.

Server: The server receives $\{h_{i,k}^t,r^t_{i,k}\}$ from $\mathcal{U}_i$ and broadcasts it to other users $\mathcal{U}_j$.

\item \textit{Commitment verification:}

User: Each $\mathcal{U}_i$ first checks for each item whether the received decommitment strings $\{h_{j,k}^t,r^t_{j,k}\}$ of all other users can pass a commitment verification, via checking whether $1 \mathop  = \limits^? \textsf{DeCommit}(h_{j,k}^t,c_{j,k}^t,r^t_{j,k})$, for each $j\in[1,n_k]\backslash \{i\}$.
If the equality test holds for every $j$ and every $k$, $\mathcal{U}_i$ moves to the next \textit{Aggregation Result Verification} step. Otherwise, $\mathcal{U}_i$ outputs $\perp$ and abort.

\item \textit{Aggregation result verification:}

User: Each $\mathcal{U}_i$ checks the integrity of the aggregation result $\mathbf{v}_k^t$ for each item $\mathcal{V}_k$ through the following equality test: \textsf{HF}($\mathbf{v}_k^t) \mathop  = \limits^?\displaystyle \prod\nolimits_{i\in[1,n_k]} h_{i,k}^t$. 
If the equality holds for all items, $\mathcal{U}_i$ accepts the updated item matrix $\mathbf{V}^{t}$ and moves to next iteration. 
Otherwise, $\mathcal{U}_i$ outputs $\perp$ and abort.

\end{enumerate}

\end{minipage}
}

\caption{The full protocol of VPFedMF (in an iteration $t$).}
\label{fig:Privacy security protocol}
\end{figure*} 

\subsection{Threat Model}

In \main, we consider that the server may be compromised by an adversary.
%
%
The adversary may attempt to infer the private gradient vectors of users, threatening the confidentiality of the raw rating values held by users locally.
Besides, the adversary may instruct the server to not correctly perform the aggregation over the gradient vectors received from users in each iteration, threatening the integrity of aggregation result for matrix factorization.
In addition, we consider that the adversary may corrupt a subset of users and know their gradient vectors.
%
%
Our security goal is to ensure the confidentiality of individual honest users' gradient vectors against other parties in the system as well as the integrity of the aggregation result against the server, throughout the whole VPFedMF procedure.
As a standard and basic assumption for secure systems \cite{ChaudhariRS20,Eskandarian22}, we assume the interactions among all parties are established via encrypted and authenticated communication channels realized via the Transport Layer Security (TLS) protocol.

\subsection{Detailed Construction}

The proposed protocol in VPFedMF for verifiable and privacy-preserving federated matrix factorization is detailed in Fig.~\ref{fig:Privacy security protocol}.
The protocol proceeds in four phases: \textit{Initialization, User Update, Secure Aggregation,} and \textit{Verification}. 
The initialization phase is performed only once at the start of the protocol, while the other three phases run sequentially in an iteration.
In what follows, we introduce the processing in each phase.
It is noted that in our protocol each user works in parallel when uploading (encrypted) data to the server. And for simplicity of presentation, we focus on introducing the processing on user $\mathcal{U}_i$.

\subsubsection{Initialization}

At the beginning, based on $\textsf{KeyAgreement}$ scheme \cite{goldreich2009foundations}, each user $\mathcal{U}_i$ generates a key pair $(\textrm{msk}_i,$ $ \textrm{mpk}_i)$ using the same group $\mathbb{G}$ with prime order $q$ and generator $g$ in $\mathsf{HF}(\cdot)$, where $\textrm{msk}_i$ is the secret key randomly chosen from $\mathbb{Z}_q$ and $\textrm{mpk}_i$ is the public key which is computed by $\textrm{mpk}_i=\textrm{msk}_i \cdot g$.
Then each $\mathcal{U}_i$ sends the public key $\textrm{mpk}_i$ to the server, which then broadcasts it to other users in the system.
%
%
Each $\mathcal{U}_i$ initializes its vector $\mathbf{u}_i^0$ and generates corresponding shared key $\textrm{ck}_{i,j}=\textrm{msk}_i \cdot \textrm{mpk}_j$ with other users' public key $\textrm{mpk}_j$, which is denoted as $\textrm{ck}_{i,j}=\textsf{KeyAgreement}(\textrm{msk}_i, \textrm{mpk}_j)$. 
The server initializes the item profile matrix $\mathbf{V}^0$.
It is noted that the key generation and distribution process are one-off and performed offline, which do not affect the online system performance.

\subsubsection{User Update}

In the $t$-th iteration, each user $\mathcal{U}_i$ generates two types of gradient vectors: (i) $\mathbf{H}_{i}^t$ for itself and (ii) $\mathbf{G}_{i,k}^t$ for the item vector update. $\mathbf{H}_{i}^t$ is utilized for updating the corresponding user vector $\mathbf{u}_i^{t-1}$ to produce $\mathbf{u}_i^{t}$, while item vectors $\mathbf{v}_k^{t-1}$ from the last iteration and $\mathbf{G}_{i,k}^t$ will be adequately encrypted and submitted to the server in the next phase.

\subsubsection{Secure Aggregation}

In this phase, secure aggregation is performed to securely aggregate each item $\mathcal{V}_k$'s gradient vectors collected from the users, so as to produce an updated vector for each item $\mathcal{V}_k$.
Besides, in order to simultaneously ensure the integrity of the aggregation at the server side, VPFedMF also enforces a verification mechanism based on the cryptographic techniques including commitment and homomorphic hash function, as mentioned above.
The secure aggregation phase in VPFedMF runs as follows.

Firstly, each $\mathcal{U}_i$ calculates $\displaystyle\frac{1}{n_k}\mathbf{v}_k^{t-1}-\mathbf{G}_{i,k}^t$ for each item $\mathcal{V}_k$, which will serve as its input in the secure aggregation.
Then, each $\mathcal{U}_i$ generates commitments $c_{i,k}^t$ for its inputs, through: $ h^t_{i,k}=\textsf{HF}(\displaystyle\frac{1}{n_k}\mathbf{v}_k^{t-1}-\mathbf{G}_{i,k}^t)$, and $ c^t_{i,k}=\textsf{Commit}( h^t_{i,k};r^t_{i,k})$.

Each $\mathcal{U}_i$ then sends the commitments $\{c_{i,k}^t\}$ to the server, which then broadcasts them to other users.
The input messages $\{h_{i,k}^t\}$ and randomnesses $\{r_{i,k}^t\}$ to the commitments are kept locally.
Subsequently, each $\mathcal{U}_i$ generates an encrypted gradient vector based on lightweight masking.
In particular, each $\mathcal{U}_i$ computes a masking vector from the shared key $\textrm{ck}_{i,j}$, based on the delicate use of a pseudo-random number generator ($\textsf{PRNG}$), as seen in Fig.~\ref{fig:Privacy security protocol}.
The way of mask generation ensures that the masking vectors will cancel out once the sum of masked gradient vectors are formed.
After performing the random masking (i.e., step 1 in the part of secure aggregation in Fig. \ref{fig:Privacy security protocol}), each $\mathcal{U}_i$ produces $\boldsymbol{\sigma}_{i,k}^t$, which is sent to the server.
It is noted that due to the enforcement of random masking, $\boldsymbol{\sigma}_{i,k}^t$ is indistinguishable from a vector filled with random values. So the server cannot infer the original data.
Upon receiving $\boldsymbol{\sigma}_{i,k}^t$ from users, the server computes the aggregation result $\mathbf{v}_k^t$ by summing up the masked vectors.
The server then broadcasts the aggregation result of each item (i.e., the updated $\mathbf{V}^t$) to all users.


\subsubsection{Verification}

This phase runs when each $\mathcal{U}_i$ receives the updated item matrix $\mathbf{V}^{t}$ from the server. 
At the beginning, each $\mathcal{U}_i$ sends the commitment inputs $\{h_{i,k}^t,r^t_{i,k}\}$ to the server, which then forwards them to other users $\mathcal{U}_j$.
Next, each $\mathcal{U}_i$ proceeds in a two-step verification process.
Firstly, $\mathcal{U}_i$ performs a commitment verification for each $j\in[1,n_k]\backslash \{i\}$:
$$1 \mathop  = \limits^? \textsf{DeCommit}(h_{j,k}^t,c_{j,k}^t,r^t_{j,k})$$
If the equality does not hold any $j$, $\mathcal{U}_i$ outputs $\perp$ and aborts.
Otherwise, $\mathcal{U}_i$ moves on to the next step for verifying the integrity of the aggregation result.
In particular, $\mathcal{U}_i$ performs the following equality test: $$\textsf{HF}(\mathbf{v}_k^t) \mathop  = \limits^? \displaystyle\prod\nolimits_{i\in[1,n_k]} h_{i,k}^t$$ 
If the equality holds for all items, $\mathcal{U}_i$ accepts the updated item matrix $\mathbf{V}^{t}$ and moves to next iteration. 
Otherwise, $\mathcal{U}_i$ outputs $\perp$ and aborts.

\subsection{Remarks}
The presented VPFedMF design not only preserves the confidentiality of items' gradient information from users but also provides strong verification for the aggregation results.
In comparison with the prior art \cite{chai2020secure} that relies on heavy homomorphic encryption for encrypting gradient vectors and supporting privacy-preserving aggregation, VPFedMF newly resorts to lightweight masking-based cryptographic techniques for protecting the privacy of gradient vectors in aggregation. 
For privacy-preserving aggregation, users only need to perform some lightweight hashing operations and arithmetic operations.
For verifiability, the use of homomorphic hash function allows to greatly compress the high-dimensional vectors into constant-sized elements, facilitating the computation of commitments.
The security of homomorphic hash function and commitment ensures that the underlying plaintext gradient vectors of an individual user are strongly protected against the server and other users in the system.

\section{Security Analysis}\label{sec:security analysis}

VPFedMF guarantees the integrity of the aggregation as well as individual user privacy. Hereafter, we analysis its security to justify the security guarantees. To ease the description, we denote by $\mathcal{S}$ the cloud server, by $\mathcal{B}$ the subset of honest users, and by $\mathcal{C}$ the subset of users corrupted by the adversary. Also, since we only need to prove the security for an iteration, we omit the notation $t$ in our description.
%


\begin{theorem}
Assuming the security of the underlying mask$\-$ing-based secure aggregation and commitment techniques, \main \enspace ensures the confidentiality of the gradient vectors of individual honest users in the system.
\end{theorem}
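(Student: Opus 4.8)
The plan is to prove confidentiality via a standard simulation-based argument, showing that the view of the adversary (controlling the server $\mathcal{S}$ together with the corrupted user subset $\mathcal{C}$) can be simulated without access to the individual gradient vectors of the honest users in $\mathcal{B}$. First I would fix an arbitrary item $\mathcal{V}_k$ and an honest user $\mathcal{U}_i \in \mathcal{B}$, and enumerate exactly what the adversary observes relating to $\mathcal{U}_i$ across the three interactive phases: the commitment $c_{i,\mathcal{V}_k}$ sent in Phase~2; the masked vector $\boldsymbol{\sigma}_{i,\mathcal{V}_k} = (\frac{1}{n_k}\mathbf{v}_k^{t-1} - \mathbf{G}_{i,\mathcal{V}_k}) + \sum_{j \neq i} \Delta_{i,j}\,\textsf{PRG}(\mathrm{ck}_{i,j}\|k\|t) \bmod B$ sent to the server; and the decommitment string $\{h_{\mathbf{G}_{i,\mathcal{V}_k}},\, r_{i,\mathcal{V}_k}\}$ revealed in Phase~3. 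The goal is to argue each of these leaks nothing about $\mathbf{G}_{i,\mathcal{V}_k}$ beyond what is implied by the final aggregate.

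The core of the argument rests on two reductions. The first uses the hiding property of the commitment scheme: by the security of \textsf{Commit}, the commitment $c_{i,\mathcal{V}_k}$ is computationally indistinguishable from a commitment to a fixed dummy value, so it can be simulated. The second, and more delicate, uses the security of the masking-based secure aggregation: for any honest $\mathcal{U}_i$ who shares at least one pairwise key $\mathrm{ck}_{i,j}$ with another honest user $\mathcal{U}_j \in \mathcal{B}$, the term $\Delta_{i,j}\,\textsf{PRG}(\mathrm{ck}_{i,j}\|k\|t)$ is pseudorandom and unknown to the adversary, so by a reduction to the security of the \textsf{PRG} (and the underlying key-agreement assumption) the masked vector $\boldsymbol{\sigma}_{i,\mathcal{V}_k}$ is indistinguishable from a uniformly random element of the message space defined by $B$. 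Thus the simulator can replace each honest user's masked contribution with a freshly sampled random vector, subject only to the single linear constraint that the honest contributions sum (modulo $B$) to the correct aggregate residual, which is already implied by the publicly broadcast result $\mathbf{v}_k^t$.

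The subtle point I expect to be the main obstacle is the interaction between the two phases: the decommitment opening in Phase~3 reveals $h_{\mathbf{G}_{i,\mathcal{V}_k}} = \textsf{HF}(\frac{1}{n_k}\mathbf{v}_k^{t-1} - \mathbf{G}_{i,\mathcal{V}_k})$, which is a deterministic function of exactly the quantity being masked in Phase~2. I must verify that revealing this homomorphic hash does not, in combination with the masked vector, enable the adversary to recover $\mathbf{G}_{i,\mathcal{V}_k}$. Here I would invoke the one-wayness of $\textsf{HF}(\cdot)$, which follows from the hardness of the discrete logarithm in $\mathbb{G}$: inverting $h_{\mathbf{G}_{i,\mathcal{V}_k}}$ to obtain the preimage vector is infeasible, so the hash by itself leaks only an image element that the simulator can produce by hashing a consistent dummy preimage. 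The argument must therefore be structured so that the simulator commits to and later opens hashes of simulated inputs that are jointly consistent with the public aggregate, and I would complete the proof by a hybrid sequence that (i) replaces real commitments with simulated ones, (ii) replaces honest masked vectors with random vectors obeying the aggregate constraint, and (iii) replaces the revealed hashes with hashes of the simulated preimages, bounding the total distinguishing advantage by the sum of the commitment-hiding, \textsf{PRG}, and discrete-logarithm advantages, each negligible under the stated assumptions.
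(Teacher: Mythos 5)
Your overall strategy (simulate the joint view of $\mathcal{S}$ and $\mathcal{C}$, handle the masked vectors via the Bonawitz-style secure-aggregation simulator and PRG security, and handle the commitments via hiding) matches the paper's proof up to the verification phase. The genuine gap is in your hybrid step (iii). In step (i) you replace each honest user's commitment with a commitment to a \emph{dummy} value, justified purely by the hiding property; but in Phase~3 the protocol forces the simulator to \emph{open} that very commitment, and the corrupted users in $\mathcal{C}$ run $\textsf{DeCommit}(h, c, r)$ on what is revealed. If the simulator committed to a dummy hash and later reveals the hash of a simulated preimage (which it can only sample after learning the honest aggregate, since the simulated inputs must sum to the correct residual), then by the binding property of an ordinary commitment scheme there is no randomness $r$ making $\textsf{DeCommit}$ accept, so the verification fails and the adversary distinguishes the simulation from the real execution. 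You notice the tension (``commits to and later opens hashes of simulated inputs that are jointly consistent with the public aggregate'') but you do not supply the tool that resolves it. The paper's proof resolves exactly this by requiring the commitment to be \emph{equivocable} in the CRS-hybrid model: the simulator holds a trapdoor that lets it open the dummy commitments to the hashes of the vectors it samples after seeing the aggregation result. Without equivocation (or some equivalent mechanism, such as the simulator being able to delay or program the commitments), your hybrid sequence breaks at this step.

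A secondary, smaller issue: invoking the one-wayness of $\textsf{HF}(\cdot)$ is not quite the right move in a simulation-based argument. The question is not whether the adversary can invert the revealed hash, but whether the simulator can produce decommitments that are both consistent with its earlier commitments and distributed indistinguishably from the real hashes; one-wayness addresses neither of these. Once equivocation is in place, the hashes the simulator reveals are those of the vectors it sampled consistently with the honest aggregate, and the indistinguishability of the overall view reduces to the masking/PRG security and the commitment properties, as in the paper.
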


\begin{proof}
The proof is mostly similar to \textit{Theorem 6.3} in \cite{bonawitz2017practical}, which indicates that the masking mechanism in secure aggregation protects the confidentiality of the gradient vectors of individual honest users, due to the existence of a simulator \textsf{SIM} for simulating the masked gradient vectors.
On the other hand, we need to additionally consider simulation for the messages related to the verification.
Firstly, we need to consider the hashes from $\textsf{HF}(\cdot)$ which are committed in the \textit{Making commitments} step. 
Here note that the simulator \textsf{SIM} does not know the real inputs of honest users by the time it needs to compute the hash and the commitment.
For this, it can generate a dummy vector, hash it, and compute the commitment.
Given the security of the masking technique and the hiding property of commitment, the joint view of $\mathcal{C}$ and $\mathcal{S}$ is indistinguishable from that in real protocol execution.


Secondly, we need to consider the verification phase.
In particular, we need to show that the joint view of users in $\mathcal{C}$ and $\mathcal{S}$ is indistinguishable from that in the real protocol execution.
The subtlety here is that \textsf{SIM} commits to dummy hashes in the beginning, which are different from the hashes of vectors sampled by \textsf{SIM} after seeing the aggregation result.
Fortunately, due to the equivocal property of commitment, in the common reference string (CRS)-hybrid model \cite{Lindell17}, \textsf{SIM} can obtain a trapdoor for the commitment scheme, which can be used to equivocate the simulated commitments to the hashes of vectors sampled by it on behalf of honest users, based on the aggregation result of honest users. 
The simulated hashes in the verification phase thus can successfully pass the commitment verification, followed by the aggregation result integrity verification.

\end{proof}

\begin{theorem}
Assuming the security of the underlying homomorphic hash function and commitment techniques, VP$\-$FedMF ensures the integrity of aggregation on the server side. In particular, in a certain iteration, an honest user will accept the received updated item vector $\mathbf{v}_k$ derived from aggregation if and only if it is correctly produced by the server.
\end{theorem}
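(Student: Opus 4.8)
The plan is to prove the biconditional by separately establishing \emph{completeness} (an honestly produced aggregate always passes verification) and \emph{soundness} (any server deviation is detected except with negligible probability), leaning on the homomorphic property of $\textsf{HF}(\cdot)$ for the former, and on collision resistance of $\textsf{HF}(\cdot)$ together with the binding property of the commitment scheme for the latter.

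First I would fix the honest reference value. Because the masks $\Delta_{i,j}\textsf{PRG}(\mathrm{ck}_{i,j}\|k\|t)$ cancel pairwise in the sum, faithful aggregation of the submitted masked vectors unmasks to $\mathbf{v}_k^{\star}=\sum_{i\in[1,n_k]}(\frac{1}{n_k}\mathbf{v}_k^{t-1}-\mathbf{G}_{i,\mathcal{V}_k}^t)=\mathbf{v}_k^{t-1}-\mathbf{G}_{\mathcal{V}_k}^t$, which is exactly the correct item update. I would then invoke the homomorphic property $\textsf{HF}(\mathbf{x}+\mathbf{y})=\textsf{HF}(\mathbf{x})\cdot\textsf{HF}(\mathbf{y})$ to obtain the key identity $\prod_{i\in[1,n_k]} h_{\mathbf{G}_{i,\mathcal{V}_k}^t}=\textsf{HF}(\mathbf{v}_k^{\star})$. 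For the forward direction, if $\mathcal{S}$ outputs the correct $\mathbf{v}_k^t=\mathbf{v}_k^{\star}$, then the aggregation-result test $\textsf{HF}(\mathbf{v}_k^t)\overset{?}{=}\prod_i h_{\mathbf{G}_{i,\mathcal{V}_k}^t}$ holds by this identity, while the commitment verification succeeds since honest users open to their genuine hashes; hence an honest $\mathcal{U}_i$ accepts.

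For the reverse direction I would argue the contrapositive: suppose $\mathcal{S}$ returns some $\tilde{\mathbf{v}}_k\neq\mathbf{v}_k^{\star}$ yet an honest $\mathcal{U}_i$ accepts. Acceptance requires $\textsf{HF}(\tilde{\mathbf{v}}_k)=\prod_i h_{\mathbf{G}_{i,\mathcal{V}_k}^t}=\textsf{HF}(\mathbf{v}_k^{\star})$, so the pair $(\tilde{\mathbf{v}}_k,\mathbf{v}_k^{\star})$ constitutes a collision for $\textsf{HF}(\cdot)$. Since $\textsf{HF}(\cdot)$ is collision resistant under the hardness of the discrete logarithm in $\mathbb{G}$, such a collision is produced only with negligible probability, so acceptance of an incorrect aggregate is negligibly likely, giving soundness.

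The crucial step, and the place that requires the commitment scheme rather than the hash alone, is ensuring the product $\prod_i h_{\mathbf{G}_{i,\mathcal{V}_k}^t}$ is pinned to the genuine inputs \emph{before} $\mathcal{S}$ learns the result, so that $\mathcal{S}$ cannot adaptively choose both the hashes it relays and $\tilde{\mathbf{v}}_k$ to make them agree. Here I would exploit the protocol ordering: the commitments $\{c_{i,\mathcal{V}_k}^t\}$ are distributed in the \emph{Making commitments} step, strictly before the aggregation result is broadcast, and the binding property of $\textsf{Commit}$ guarantees that any decommitment string passing $\textsf{DeCommit}$ opens to the originally committed hash. Consequently the hashes entering the product are exactly those fixed at commitment time, which for honest users equal $\textsf{HF}(\frac{1}{n_k}\mathbf{v}_k^{t-1}-\mathbf{G}_{i,\mathcal{V}_k}^t)$ by construction, so the target $\textsf{HF}(\mathbf{v}_k^{\star})$ is determined independently of $\tilde{\mathbf{v}}_k$. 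The main obstacle is therefore not a calculation but this soundness subtlety: binding must be combined with collision resistance, since a server able to equivocate its relayed commitments could otherwise match any forged result. I would close the argument by stating explicitly that a successful forgery breaks at least one of the two assumptions, which completes the reduction.
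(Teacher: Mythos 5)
Your proof is correct and follows essentially the same route as the paper's: the binding property of the commitment pins the hashes $h_{\mathbf{G}_{i,\mathcal{V}_k}^t}$ before the server reveals its result, and acceptance of a forged $\tilde{\mathbf{v}}_k\neq\mathbf{v}_k$ would yield a collision for $\textsf{HF}(\cdot)$, contradicting its collision resistance. You additionally spell out the completeness direction (that a correctly produced aggregate passes, via the homomorphic identity $\prod_i h_{\mathbf{G}_{i,\mathcal{V}_k}^t}=\textsf{HF}(\mathbf{v}_k)$), which the paper leaves implicit; this is a welcome but minor addition.
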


\begin{proof}
Assume that there exists a probabilistic polynomial-time (PPT) adversary which can produce a forged aggregation result $\mathbf{v}_k^*$ ($\mathbf{v}_k^*\neq \mathbf{v}_k$), and make an honest user $\mathcal{U}_i\in \mathcal{B}$ accept the forged aggregation result.
Firstly, since $\mathcal{U}_i$ does not output $\perp$, the decommitment strings from the users in $\mathcal{C}$ should be able to pass the commitment verification phase.
Here, it is noted that due to the binding property of the commitment technique, the commitment verification will fail with a non-negligible probability if the users in $\mathcal{C}$ instructed by the adversary send malformed decommitment strings.
So once the hash values have been committed, the users in $\mathcal{C}$ cannot change them without having an honest user $\mathcal{U}_i$ output $\perp$.
If the adversary manages to have an honest user accept the forged aggregation result $\mathbf{v}_k^*$, it is required that $\textsf{HF}(\mathbf{v}_k)=\textsf{HF}(\mathbf{v}_k^*)$, i.e., $$\prod\nolimits_{l \in [1,d]} g_l^{v_{k,l}}=\prod\nolimits_{l \in [1,d]} g_l^{v^*_{k,l}}.$$

However, given that $\mathbf{v}_k^*\neq \mathbf{v}_k$, this will happen with negligible probability, given the collision resistance property of the homomorphic hash function.
Therefore, the assumption in the beginning does not hold.
The adversary cannot have an honest user accept a forged aggregation result in \main.

\end{proof}

\section{Experiments}\label{sec:experiment}

\subsection{Setup}
We implement VPFedMF in Python. In particular, the homomorphic hash function $\textsf{HF}(\cdot)$ is realized via elliptic curve NIST-P256. The commitment scheme is realized via hash commitments instantiated via SHA-256.
%
%
For pseudo-random number generator, we use AES in CTR mode. For key agreement, we use Diffie-Hellman key exchange over elliptic curve NIST-P256. In addition, we set the modulus $B=2^{34}$.
We use a real-world movie rating dataset MovieLens \cite{harper2015movielens}, which consists of 610 users rating on 9712 movies.
We adopt a common trick for scaling floating-point numbers up to integers as required by cryptographic computation \cite{WangRWW13,ZhengDW18}, where a large scaling factor $\alpha=10^7$ is used.
The server process and user process are deployed on a laptop equipped with a 4-core Intel i5-8300H CPU (2.3 GHz) and 8 GB RAM. 
For running-time related experiments, we report the results averaged over 10 runs.
In our experiments, we compare with the state-of-the-art prior work by Chai \textit{et al.} \cite{chai2020secure}.

\subsection{Offline Optimization}

VPFedMF aims to be utilized in a setting where multiple users want to collaboratively train a joint model for personalized recommendation so as to benefit each other, while keeping their privacy preserved.
%
%
%
Hence, we consider all users are willing to participate in each iteration. 
Namely, in the setting considered by VPFedMF, the participants are not limited with computation resource or network bandwidth, as opposed to the IoT setting. 
In such context, we perform the following offline processing for performance optimization.
Recall that the computation of the homomorphic hash function is within the cyclic group $\mathbb{G}$, which needs to produce the element $g_l^{x_l}$ in each dimension of the input vector $\mathbf{x}$ via expensive exponentiation.
In order to circumvent the latency from such expensive computation in the group, our idea is to pre-generate a set of group elements in an offline phase.
When the actual learning process takes place, the computation of $g_l^{x_l}$ can be simply converted to the fast searching over a set of elements.

\begin{table*}[t!]
  \large
  \centering
  \caption{VPFedMF's Computation Performance in the PartText Setting}
  \label{tab:computation overhead PartText}%
  \begin{spacing}{0.8}
  \resizebox{0.65\linewidth}{!}{
    
    \begin{tabular}{cccccccccc}
    \toprule
    \multicolumn{1}{c}{\multirow{4}[2]{*}{User}} & \multicolumn{1}{c}{\multirow{4}[2]{*}{Items}} & \multicolumn{1}{c}{\multirow{4}[2]{*}{}} & \multirow{4}[2]{*}{Phase 1} & \multicolumn{3}{c}{\multirow{2}{*}{Phase 2 }} & \multicolumn{3}{c}{\multirow{2}{*}{Phase 3}} \\
          &       & \multicolumn{1}{c}{} & \multicolumn{1}{c}{} & \multicolumn{3}{c}{}          & \multicolumn{3}{c}{} \\
\cmidrule{5-10}          &       & \multicolumn{1}{c}{} & \multicolumn{1}{c}{} & \multicolumn{1}{c}{\multirow{2}{*}{0}} & \multicolumn{1}{c}{\multirow{2}{*}{1}} & \multicolumn{1}{c}{\multirow{2}{*}{2}}  & \multicolumn{1}{c}{\multirow{2}{*}{0}} & \multicolumn{1}{c}{\multirow{2}{*}{1}}  & \multicolumn{1}{c}{\multirow{2}{*}{2}}\\
          &       & \multicolumn{1}{c}{} & \multicolumn{1}{c}{} & \multicolumn{1}{c}{} & \multicolumn{1}{c}{} & \multicolumn{1}{c}{} & \multicolumn{1}{c}{} & \multicolumn{1}{c}{} & \multicolumn{1}{c}{}\\
    \midrule
    \multirow{6}[6]{*}{100} & \multirow{2}{*}{60} & User & 1 ms  & 14 ms & 26 ms & –      & 0 ms     & 24 ms  & 233 ms  \\
          &       & Server & –     & 1 ms   & –   & 53 ms     & 1 ms  & –   & –  \\
\cmidrule{2-10}          & \multirow{2}{*}{240} & User & 3 ms  & 31 ms & 58 ms & –   & 0 ms     & 54 ms  & 745 ms \\
          &       & Server & –     & 2 ms & –    & 178 ms     & 7 ms & –    & –  \\
\cmidrule{2-10}         & \multirow{2}[2]{*}{640} & User & 7 ms & 57 ms & 78 ms & –  & 0 ms    & 133 ms & 1942 ms  \\
          &       & Server & –     & 11 ms  & –  & 273 ms     & 18 ms & –  & – \\
    \midrule
    \multirow{6}[6]{*}{300} & \multirow{2}[2]{*}{60} & User & 2 ms & 14 ms & 68 ms & – & 0 ms     & 50 ms  & 460 ms \\
          &       & Server & –     & 3 ms  & –   & 156 ms     & 5 ms & –   & – \\
\cmidrule{2-10}          & \multirow{2}[2]{*}{240} & User & 6 ms & 35 ms & 148 ms & –  & 0 ms     & 157 ms & 1288 ms \\
          &       & Server & –     & 11 ms  & –  & 440 ms     & 20 ms & – & –  \\
\cmidrule{2-10}          & \multirow{2}[2]{*}{640} & User & 9 ms & 60 ms & 198 ms  & –   & 0 ms & 285 ms & 2501 ms\\
          &       & Server & –     & 28 ms  & –   & 717 ms     & 49 ms & –   & – \\
    \bottomrule
    \end{tabular}%
   }

   \end{spacing}

\end{table*}%

\begin{table*}[t!]
  \large
  \centering
  \caption{VPFedMF's Computation Performance in the FullText Setting}
  \label{tab:computation overhead FullText}%
  \begin{spacing}{0.8}
  \resizebox{0.65\linewidth}{!}{
    \begin{tabular}{cccccccccc}
    \toprule
    \multicolumn{1}{c}{\multirow{4}[2]{*}{User}} & \multicolumn{1}{c}{\multirow{4}[2]{*}{Items}} & \multicolumn{1}{c}{\multirow{4}[2]{*}{}} & \multirow{4}[2]{*}{Phase 1} & \multicolumn{3}{c}{\multirow{2}{*}{Phase 2 }} & \multicolumn{3}{c}{\multirow{2}{*}{Phase 3}} \\
          &       & \multicolumn{1}{c}{} & \multicolumn{1}{c}{} & \multicolumn{3}{c}{}          & \multicolumn{3}{c}{} \\
\cmidrule{5-10}          &       & \multicolumn{1}{c}{} & \multicolumn{1}{c}{} & \multicolumn{1}{c}{\multirow{2}{*}{0}} & \multicolumn{1}{c}{\multirow{2}{*}{1}} & \multicolumn{1}{c}{\multirow{2}{*}{2}}  & \multicolumn{1}{c}{\multirow{2}{*}{0}} & \multicolumn{1}{c}{\multirow{2}{*}{1}}  & \multicolumn{1}{c}{\multirow{2}{*}{2}}\\
          &       & \multicolumn{1}{c}{} & \multicolumn{1}{c}{} & \multicolumn{1}{c}{} & \multicolumn{1}{c}{} & \multicolumn{1}{c}{} & \multicolumn{1}{c}{} & \multicolumn{1}{c}{} & \multicolumn{1}{c}{}\\
    \midrule
    \multirow{6}[6]{*}{100} & \multirow{2}{*}{60} & User & 3 ms  & 40 ms & 312 ms & –      & 0 ms     & 68 ms  & 564 ms  \\
          &       & Server & –     & 1 ms   & –   & 204 ms     & 1 ms  & –   & –  \\
\cmidrule{2-10}          & \multirow{2}{*}{240} & User & 13 ms  & 168 ms & 1245 ms & –   & 0 ms     & 228 ms  & 2171 ms \\
          &       & Server & –     & 3 ms & –    & 763 ms     & 7 ms & –    & –  \\
\cmidrule{2-10}         & \multirow{2}[2]{*}{640} & User & 75 ms & 454 ms & 3211 ms & –  & 0 ms    & 614 ms & 5607 ms  \\
          &       & Server & –     & 10 ms  & –  & 2001 ms     & 20 ms & –  & – \\
    \midrule
    \multirow{6}[6]{*}{300} & \multirow{2}[2]{*}{60} & User & 42 ms & 41 ms & 892 ms & – & 0 ms     & 169 ms  & 1501 ms \\
          &       & Server & –     & 3 ms  & –   & 640 ms     & 5 ms & –   & – \\
\cmidrule{2-10}          & \multirow{2}[2]{*}{240} & User & 93 ms & 172 ms & 3785 ms & –  & 0 ms     & 779 ms & 5647 ms \\
          &       & Server & –     & 11 ms  & –  & 2556 ms     & 32 ms & – & –  \\
\cmidrule{2-10}          & \multirow{2}[2]{*}{640} & User & 168 ms & 459 ms & 10261 ms  & –   & 0 ms & 1959 ms & 15132 ms\\
          &       & Server & –     & 26 ms  & –   & 6415 ms     & 70 ms & –   & – \\
    \bottomrule
    \end{tabular}%
   }

   \end{spacing}
\end{table*}

\subsection{Computation Overhead of Each Iteration}

In this section, we first analyze fine-grained time consumption in each step of a single iteration when training over the Movielens dataset in VPFedMF. Then we evaluate the computation overhead for each iteration and compare with the results reported in FedMF~\cite{chai2020secure}. In addition, time consumption as a function of the dimension size (i.e., the dimension $d$ of the latent user profile and item profile) is evaluated. 

Following FedMF~\cite{chai2020secure}, we evaluate two rating settings: \textit{PartText} and \textit{FullText}. These two settings have slight difference when users submit vectors to the server. In the \textit{PartText} setting, users are allowed to only upload gradient vectors for items which have been rated. As for the \textit{FullText} setting, users submit gradient vectors from all items.
For items that a user has not rated, the corresponding elements in the gradient vector are set to 0.

\subsubsection{Fine-Grained Time Consumption of Each Step}\label{sec:computation}

The computation overhead for each protocol step as detailed in Fig.~\ref{fig:Privacy security protocol} is comprehensively evaluated and summarized in Table \ref{tab:computation overhead PartText} with respect to \textit{PartText} and Table \ref{tab:computation overhead FullText} with respect to \textit{FullText}. Specifically, we fix the number of users to be 100 and 300 in both \textit{PartText} and \textit{FullText} settings but vary the number of items for all users to evaluate computation performance. Note that the computation overhead of users reported in this work is actually the average time consumption in each step per user.

As we can see from Table \ref{tab:computation overhead PartText} and Table \ref{tab:computation overhead FullText}, most time is consumed in secure aggregation phase for the reason that each user needs to utilize $\textrm{ck}_{i,j}$, item id $k$ and iteration $t$ as the input for the \textsf{PRNG} on the user side while the server needs to aggregate all these masked vectors for each item $\mathcal{V}_k$ rated by $n_k$ users. Besides, for the verification phase, in the $\textit{Aggregation result verification}$ step each user $\mathcal{U}_i$ needs to verify all the updated item vectors $\mathbf{v}_k^t$ from the server based on $\textsf{HF}(\cdot)$.
Consider 300 users and 640 rated items as an example. The computation overhead for a user in the \textit{Masking gradient vectors} step is 198 ms in \textit{PartText} and 10261 ms in \textit{FullText}. Besides, at the end of the iteration, each user needs to spend 285 ms and 2501 ms in \textit{PartText}, 1959 ms and 15132 ms in \textit{FullText}, for the \textit{Commitment verification} and \textit{Aggregation result verification} steps respectively. These three steps dominate the whole overall consumption in each iteration on the user side. As for the server, time consumption in the \textit{Aggregating masked gradient vectors} step takes up over 90\% (exactly 717 ms in \textit{PartText} and 6415 ms in \textit{FullText}). Consequently, time consumption by users and the server induced by other steps can be comparatively neglected.

\subsubsection{Overall Time Consumption and Comparison}

\begin{table}[b!]
\centering
\caption{Different Rating Settings for Training}
\small
\setlength{\tabcolsep}{1.5mm}{
\begin{spacing}{0.6}
\begin{tabular}{@{}cccccccc@{}}

\toprule
Items & 60    & 80    & 160   & 320   & 640   & 1280  & 2560  \\ \midrule
Ratings & 9497  & 12087 & 20512 & 32371 & 47883 & 65728 & 81786\\ \bottomrule
\label{Table:time consumption for each iteration}
\end{tabular}
\end{spacing}}
\end{table}

\begin{figure}[t!]
\centerline{\includegraphics[width=0.35\textwidth]{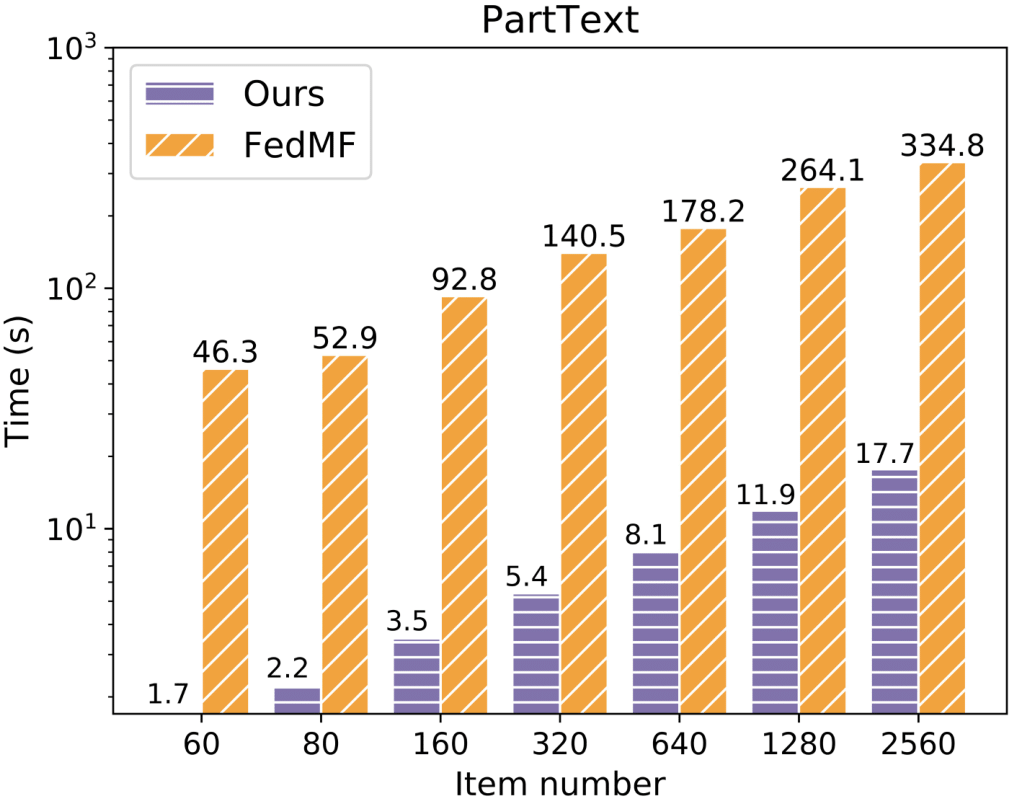}}
\caption{Time consumption of each iteration as a function of number of items held by each user.}
\label{fig:time consumption for each iteration Part}
\end{figure}

\begin{figure}[t!]
\centerline{\includegraphics[width=0.35\textwidth]{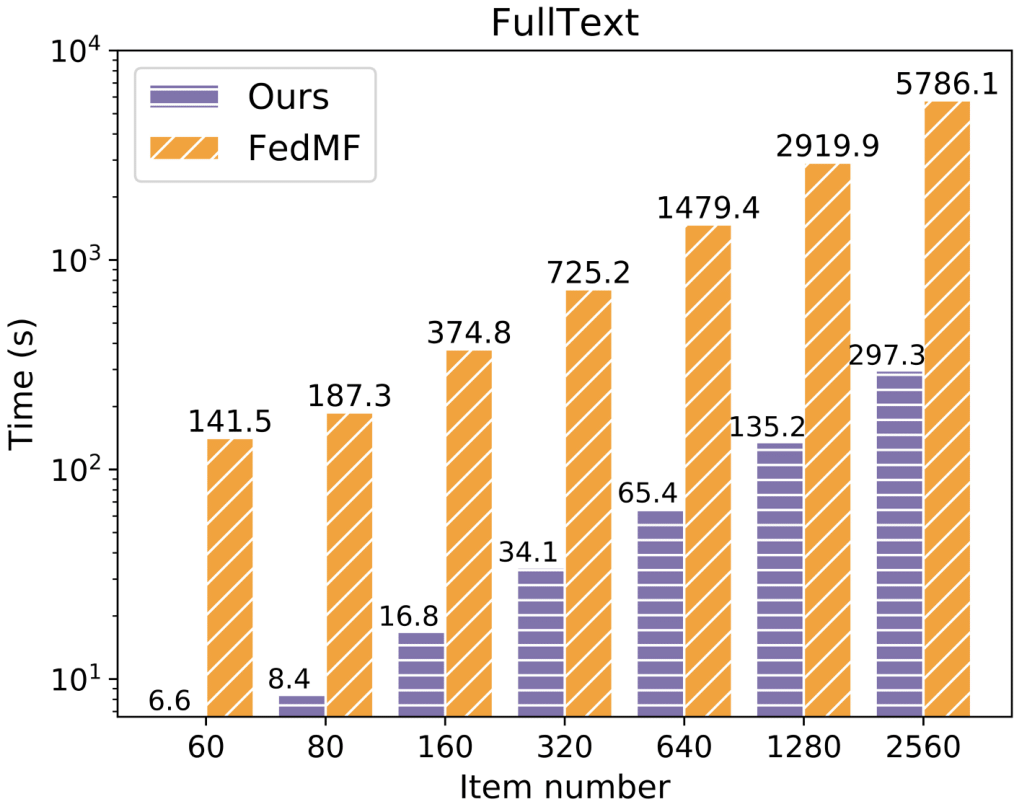}}
\caption{Time consumption of each iteration as a function of number of items held by each user.}
\label{fig:time consumption for each iteration Full}
\end{figure}

In FedMF \cite{chai2020secure}, gradient updates are protected by additive homomorphic encryption, and the aggregation is performed over the resulting ciphertexts.
%
Although it can thwart privacy data leakage, homomorphic encryption is too costly to be efficient enough in practice. 
%
To make an apple-to-apple comparison, we follow the same setting as \cite{chai2020secure}, where the number of users is fixed to 610 for training. The dimension $d$ is set to 100. Table~\ref{Table:time consumption for each iteration} summarizes the varying number of ratings in our implementation. 

As shown in Fig. \ref{fig:time consumption for each iteration Part} and Fig. \ref{fig:time consumption for each iteration Full}, the time consumption of VPFedMF for each iteration is significantly less than the counterpart FedMF, under both \textit{PartText} and \textit{FullText} settings. More specifically, the time consumption in FedMF is about $20\times $ higher than the VPFedMF with the item number varying from 60 to 2560. For example, VPFedMF costs 297.3 seconds in the \textit{FullText} setting to train 2560 items, compared with 5786.1 seconds in FedMF. As for the \textit{PartText} setting,  VPFedMF only costs 17.7 seconds when training 2560 items, in contrast to 334.8 seconds in FedMF.

\begin{table*}[t!]
  \large
  \centering
  \begin{spacing}{1.0}
   \caption{Outgoing communication overhead for each step in the PartText setting}
  \label{tab:PartText outgoing communication overhead}
  \resizebox{0.65\linewidth}{!}{
    \begin{tabular}{cccccccccc}
    \toprule
    \multicolumn{1}{c}{\multirow{4}[2]{*}{Users}} & \multicolumn{1}{c}{\multirow{4}[2]{*}{Items}} & \multicolumn{1}{c}{\multirow{4}[2]{*}{}} & \multirow{4}[2]{*}{Phase 1} & \multicolumn{3}{c}{\multirow{2}{*}{Phase 2}} & \multicolumn{3}{c}{\multirow{2}{*}{Phase 3}} \\
          &       & \multicolumn{1}{c}{} & \multicolumn{1}{c}{} & \multicolumn{3}{c}{}          & \multicolumn{3}{c}{} \\
\cmidrule{5-10}          &       & \multicolumn{1}{c}{} & \multicolumn{1}{c}{} & \multicolumn{1}{c}{\multirow{2}{*}{0}} & \multicolumn{1}{c}{\multirow{2}{*}{1}} & \multicolumn{1}{c}{\multirow{2}{*}{2}} &  \multicolumn{1}{c}{\multirow{2}{*}{0}} & \multicolumn{1}{c}{\multirow{2}{*}{1}} & \multicolumn{1}{c}{\multirow{2}{*}{2}}\\
          &       & \multicolumn{1}{c}{} & \multicolumn{1}{c}{} & \multicolumn{1}{c}{} & \multicolumn{1}{c}{} & \multicolumn{1}{c}{} & \multicolumn{1}{c}{} & \multicolumn{1}{c}{} & \multicolumn{1}{c}{}
          \\
    \midrule
    \multirow{6}[6]{*}{100} & \multirow{2}[2]{*}{60} & User & –     & 5.09 KB & 131.34 KB & – & 5.23 KB    & – & –  \\
          &       & Server & –     & 146.58 KB& –& 140.63 KB     & 150.95 KB & – & – \\
\cmidrule{2-10}          & \multirow{2}[2]{*}{240} & User & –     & 19.71 KB & 509.13 KB & – & 20.26 KB & –  & –\\
          &       & Server & –     & 399.34 KB & – & 562.50 KB     & 411.18 KB & – & – \\
\cmidrule{2-10}          & \multirow{2}[2]{*}{640} & User & –     & 47.68 KB & 1233.16 KB & – & 49.09 KB     & – & – \\
          &       & Server & –     & 711.12 KB & – & 1500.00 KB     & 732.22 KB & – & – \\
    \midrule
    \multirow{6}[6]{*}{300} & \multirow{2}[2]{*}{60} & User & –     & 5.09 KB & 131.34 KB & – & 5.23 KB     & – & –  \\
          &       & Server & –     & 419.32 KB & – & 140.63 KB     & 431.69 KB & – & – \\
\cmidrule{2-10}          & \multirow{2}[2]{*}{240} & User & –     & 19.71 KB & 509.13 KB & –  & 20.31 KB    & –  & –  \\
          &       & Server & –     & 1174.58 KB & –  & 562.50 KB     & 1209.46 KB & –  & –  \\
\cmidrule{2-10}          & \multirow{2}[2]{*}{640} & User & –     & 47.68 KB & 1233.16 KB & – & 49.09 KB    & – & – \\
          &       & Server & –     & 2062.71 KB & –  & 1500.00 KB     & 2123.61 KB & – & –  \\
    \bottomrule
    \end{tabular}%
  }
  \end{spacing}

\end{table*}%

\begin{table*}[t!]
  \large
  \centering
  \begin{spacing}{1.0}
  \caption{Outgoing communication overhead for each step in the FullText setting}
  \label{tab:FullText outgoing communication overhead}
  \resizebox{0.65\linewidth}{!}{
    \begin{tabular}{cccccccccc}
    \toprule
    \multicolumn{1}{c}{\multirow{4}[2]{*}{Users}} & \multicolumn{1}{c}{\multirow{4}[2]{*}{Items}} & \multicolumn{1}{c}{\multirow{4}[2]{*}{}} & \multirow{4}[2]{*}{Phase 1} & \multicolumn{3}{c}{\multirow{2}{*}{Phase 2}} & \multicolumn{3}{c}{\multirow{2}{*}{Phase 3}} \\
          &       & \multicolumn{1}{c}{} & \multicolumn{1}{c}{} & \multicolumn{3}{c}{}          & \multicolumn{3}{c}{} \\
\cmidrule{5-10}          &       & \multicolumn{1}{c}{} & \multicolumn{1}{c}{} & \multicolumn{1}{c}{\multirow{2}{*}{0}} & \multicolumn{1}{c}{\multirow{2}{*}{1}} & \multicolumn{1}{c}{\multirow{2}{*}{2}} &  \multicolumn{1}{c}{\multirow{2}{*}{0}} & \multicolumn{1}{c}{\multirow{2}{*}{1}} & \multicolumn{1}{c}{\multirow{2}{*}{2}}\\
          &       & \multicolumn{1}{c}{} & \multicolumn{1}{c}{} & \multicolumn{1}{c}{} & \multicolumn{1}{c}{} & \multicolumn{1}{c}{} & \multicolumn{1}{c}{} & \multicolumn{1}{c}{} & \multicolumn{1}{c}{}
          \\
    \midrule
    \multirow{6}[6]{*}{100} & \multirow{2}[2]{*}{60} & User & –     & 5.45 KB & 140.63 KB & – & 5.63 KB    & – & –  \\
          &       & Server & –     & 539.47 KB& – & 140.63 KB     & 555.49 KB & – & – \\
\cmidrule{2-10}          & \multirow{2}[2]{*}{240} & User & –     & 21.80 KB & 562.50 KB & – & 22.48 KB & –  & –\\
          &       & Server & –     & 2157.88 KB & – & 562.50 KB     & 2221.93 KB & – & – \\
\cmidrule{2-10}          & \multirow{2}[2]{*}{640} & User & –     & 58.13 KB & 1500.00 KB & – & 59.91 KB     & – & – \\
          &       & Server & –     & 5754.37 KB & – & 1500.00 KB     & 5924.74 KB & – & – \\
    \midrule
    \multirow{6}[6]{*}{300} & \multirow{2}[2]{*}{60} & User & –     & 5.45 KB & 140.63 KB & – & 5.63 KB     & – & –  \\
          &       & Server & –     & 1629.31 KB & – & 140.63 KB     & 1677.55 KB & – & – \\
\cmidrule{2-10}          & \multirow{2}[2]{*}{240} & User & –     & 21.80 KB & 562.50 KB & –  & 22.48 KB    & –  & –  \\
          &       & Server & –     & 6517.26 KB & –  & 562.50 KB     & 6709.67 KB & –  & –  \\
\cmidrule{2-10}          & \multirow{2}[2]{*}{640} & User & –     & 58.13 KB & 1500.00 KB & – & 59.91 KB    & – & – \\
          &       & Server & –     & 17379.37 KB & –  & 1500.00 KB     & 17893.42 KB & – & –  \\
    \bottomrule
    \end{tabular}%
  }
  \end{spacing}

\end{table*}%

\subsubsection{Scalability with Respect to the Dimension}

Matrix factorization decomposes the sparse rating matrix $\mathbf{R}$ into the user profile matrix and the item profile matrix, where each row vector in both matrices is of the same latent dimension $d$. Time consumption differs for varying latent dimension $d$ due to various sizes of user profile matrix and item profile matrix. To evaluate how the dimension size $d$ affects the time consumption of VPFedMF, we fix the user number to be 610 and item number to be 320, and vary the dimension $d$ for evaluation.
The results are detailed in Fig.~\ref{fig:PartText dimension time consumption for each iteration} and Fig.~\ref{fig:FullText dimension time consumption for each iteration}. With the dimension increasing, the computation cost of the whole system increases approximately \textit{linearly}. Compared to the \textit{FullText} setting, each iteration exhibits less time consumption in \textit{PartText} setting (about $6\times$ to $7 \times$ in our experiments).

%


\begin{figure}[t!]
\centerline{\includegraphics[width=0.35\textwidth]{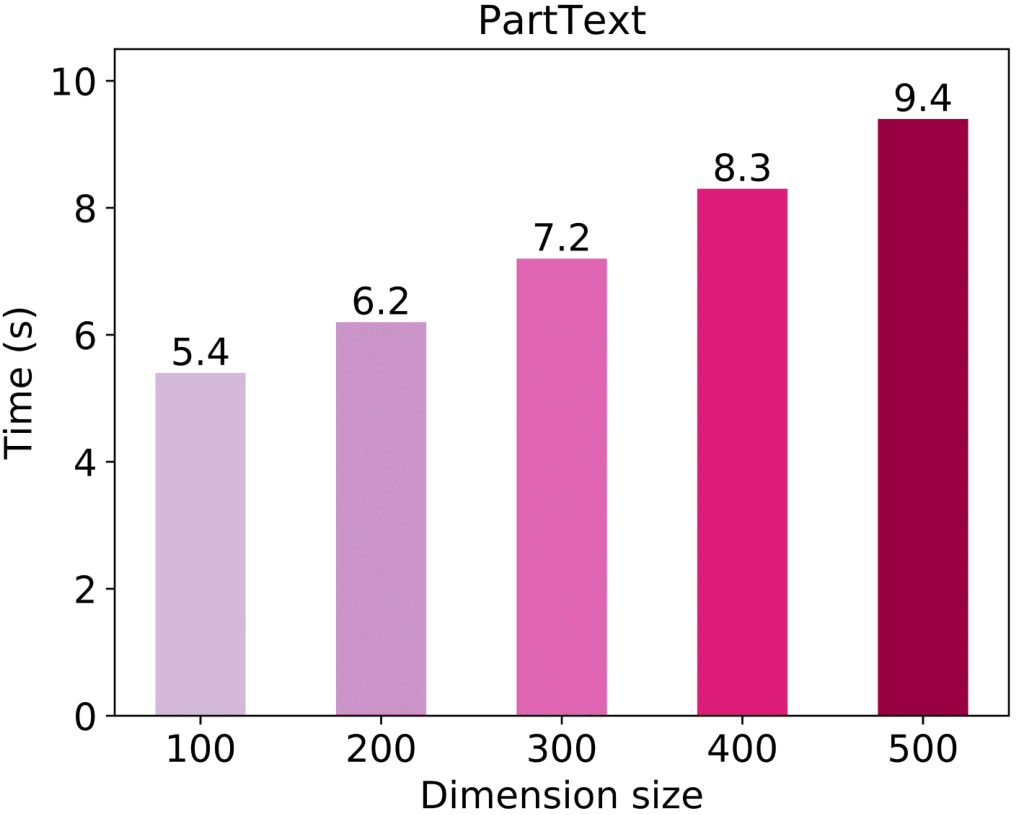}}
\caption{Computation overhead under the \textit{PartText} setting as a function of dimension size $d$.}
\label{fig:PartText dimension time consumption for each iteration}
\end{figure}

\begin{figure}[t!]
\centerline{\includegraphics[width=0.35\textwidth]{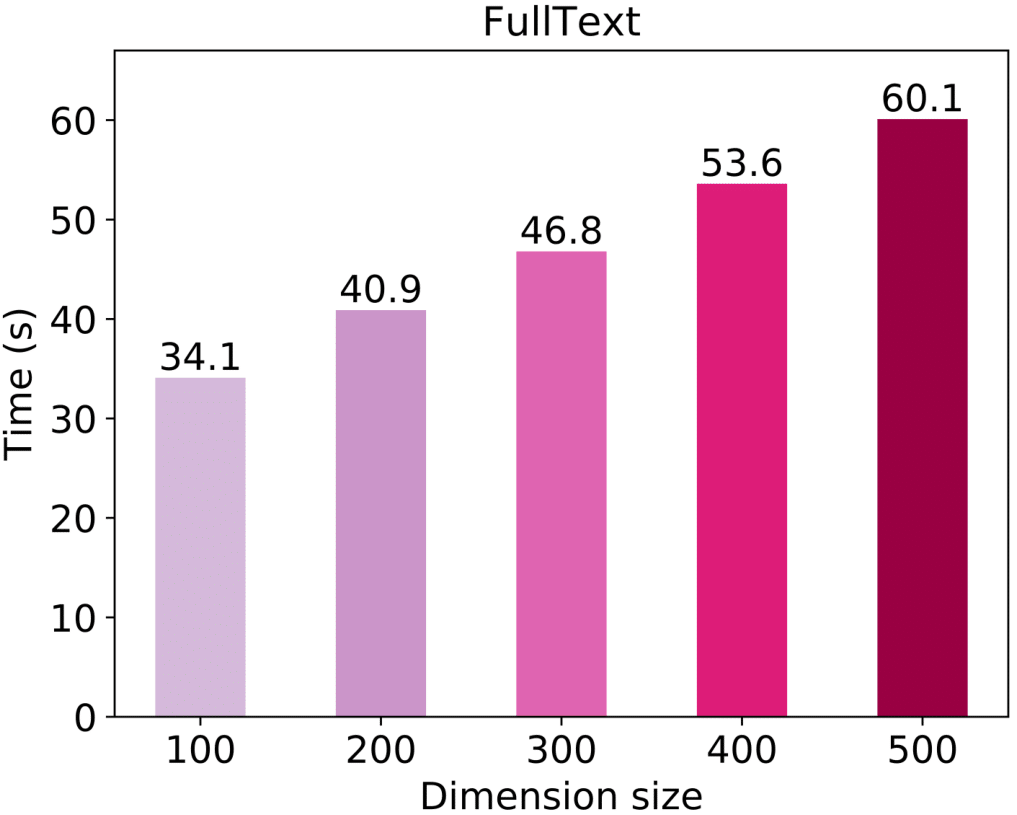}}
\caption{Computation overhead under the \textit{FullText} setting as a function of dimension size $d$.}
\label{fig:FullText dimension time consumption for each iteration}
\end{figure}

\subsection{Outgoing Communication Overhead}

We evaluate the outgoing communication overhead for both \textit{PartText} and \textit{FullText} settings. Particularly, the overhead from the server is outgoing communication sent from the server to a single user. When concerning on the communication channel from user to server, we measure the maximum sizes of the packets transmitted by a particular user as the communication overhead in each step on the user side. Table \ref{tab:PartText outgoing communication overhead} and Table \ref{tab:FullText outgoing communication overhead} summarize communication overhead for each iteration in our VPFedMF under \textit{PartText} and \textit{FullText} settings, respectively. 
The setup is same as Section \ref{sec:computation}. In the secure aggregation phase, most communication consumption is spent on the \textit{Masking gradient vectors} step on the user side, where each user needs to send all the masked gradient vectors to the server.
As for the server, it spends most on the \textit{Making commitments} and \textit{Aggregating masked gradient vectors} steps. 
When the user number is fixed, the communication overhead grows approximately \textit{linearly} in \textit{FullText} with the increasing item number. Note that between the two settings, the gap of communication cost in \textit{Making commitments} and \textit{Decommitting} steps on the server side enlarges when item number increases.
This is because that in the \textit{FullText} setting, the server needs to broadcast the \{$c^t_{i,k}$\}, \{$h^t_{i,_k},r^t_{i,k}$\} for all items to users even given that the rating matrix is sparse, while in the \textit{PartText} setting the server only needs to broadcast them for the rated items provided by corresponding users.
%


\subsection{Accuracy}

\begin{figure}[t!]
\centerline{\includegraphics[width=0.35\textwidth]{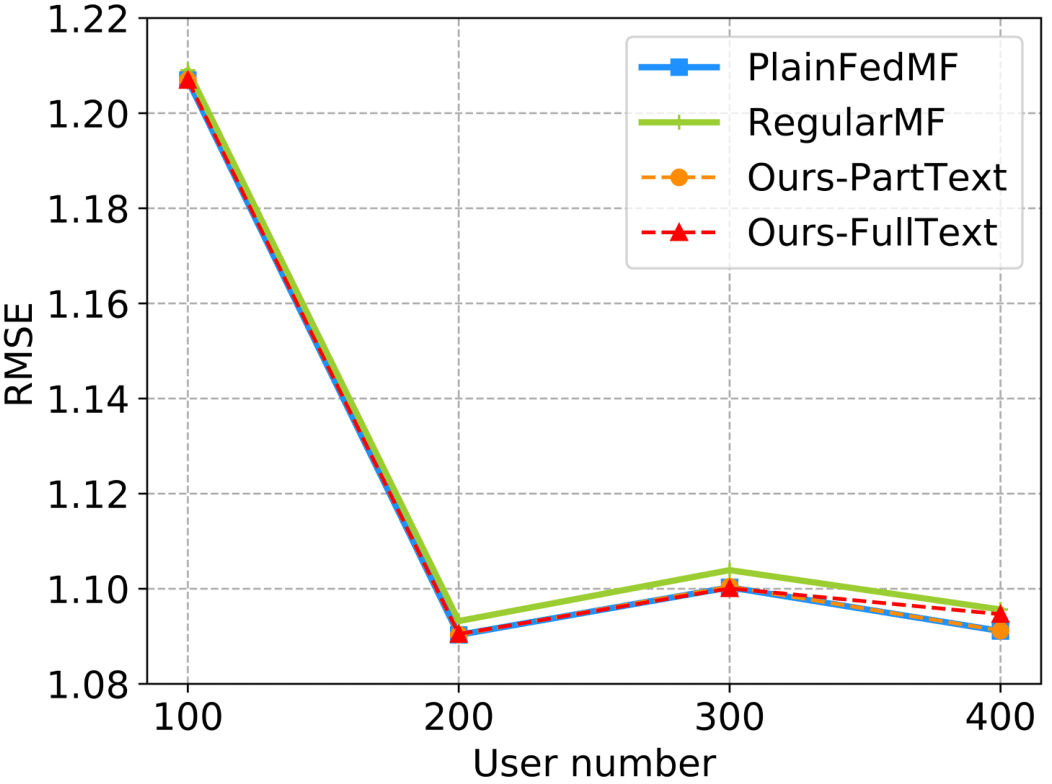}}
\caption{RMSE as a function of varying user numbers.}
\label{fig:RMSE users}
\end{figure}

\begin{figure}[t!]
\centerline{\includegraphics[width=0.35\textwidth]{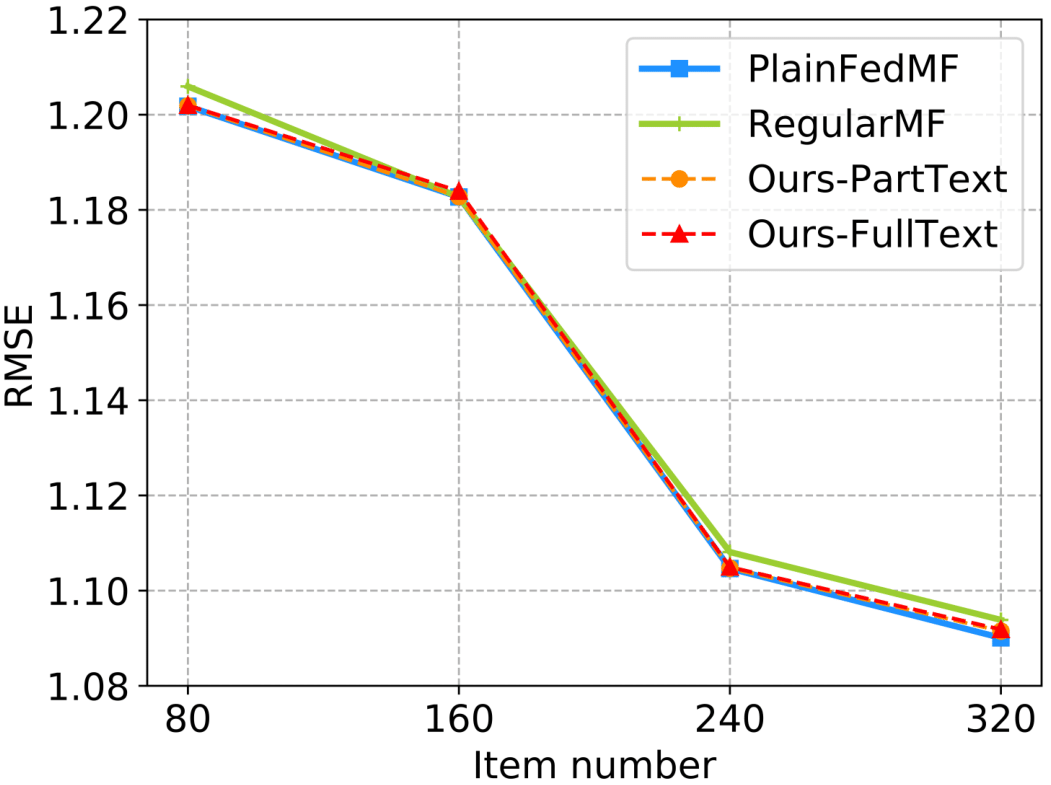}}
\caption{RMSE as a function of varying item numbers.}
\label{fig:RMSE items}
\end{figure}

Root Mean Squared Error (RMSE) is a common accuracy metric used in recommender systems to evaluate the training performance \cite{koren2009matrix}. We utilize RMSE to examine the accuracy of VPFedMF in both \textit{PartText} and \textit{FullText} settings, which is compared with FedMF, PlainFedMF---federated MF in plaintext domain as described in Algorithm~\ref{DistributedMF}---and the conventional centralized MF abbreviated as RegularMF.
We set the iteration number to 50 so that the training processes of these four schemes converge. In Fig. \ref{fig:RMSE users} we fix the item number to be 300 and in Fig. \ref{fig:RMSE items} we fix the user number to be 300. Both figures illustrate the RMSE of each aforementioned MF scheme by varying the number of users and items, respectively. These four schemes show almost the same RMSE with negligible gap.

\section{Conclusion}\label{sec:conclusion}

In this paper, we propose VPFedMF, a new protocol for privacy-preserving and verifiable federated matrix factorization.
VPFedMF provides protection for the individual gradient updates through masking-based lightweight secure aggregation, which allows the server to perform aggregation to update the item profile matrix without seeing individual gradient updates.
In the meantime, VPFedMF allows users to have cryptographic verification on the correctness of the aggregation result produced by the server in each iteration, building on techniques including homomorphic hash function and commitment.
VPFedMF is tested over a real-world movie rating dataset for federated matrix factorization.
The evaluation results demonstrate the practicality of VPFedMF, as well as the performance advantage over prior art (in addition to the security advantage).

\section{Acknowledgment}

This work was supported in part by the National Natural Science Foundation of China (Grants 62002167 and 61702268), by the National Natural Science Foundation of JiangSu (Grant BK20200461), by the Shenzhen Science and Technology Program (Grant RCBS20210609103056041), and by the Guangdong Basic and Applied Basic Research Foundation (Grant 2021A1515110027).

This work was initialized and partially done when X. Wan was with Nanjing University of Science and Technology and mentored by Y. Gao.

\bibliography{References}

\begin{thebibliography}{10}
\expandafter\ifx\csname url\endcsname\relax
  \def\url#1{\texttt{#1}}\fi
\expandafter\ifx\csname urlprefix\endcsname\relax\def\urlprefix{URL }\fi
\expandafter\ifx\csname href\endcsname\relax
  \def\href#1#2{#2} \def\path#1{#1}\fi

\bibitem{gdpr}
Europe, General data protection regulation, \url{https://gdpr-info.eu/},
  accessed Feb 05, 2021 (2016).

\bibitem{cpra}
U.~States, California privacy rights act,
  \url{https://www.cookiebot.com/en/cpra/}, accessed Feb 05, 2021 (2020).

\bibitem{cdsl}
China, China data security law,
  \url{https://www.china-briefing.com/news/a-close-reading-of-
  chinas-data-security-law-in-effect-sept-1-2021/}, accessed Sep 21, 2021
  (2021).

\bibitem{mcmahan2017communication}
B.~McMahan, E.~Moore, D.~Ramage, S.~Hampson, B.~A. y~Arcas,
  Communication-efficient learning of deep networks from decentralized data,
  in: Proc. of AISTATS, 2017.

\bibitem{li2020federated}
T.~Li, A.~K. Sahu, A.~Talwalkar, V.~Smith, Federated learning: Challenges,
  methods, and future directions, IEEE Signal Processing Magazine 37~(3) (2020)
  50--60.

\bibitem{zhang2021survey}
C.~Zhang, Y.~Xie, H.~Bai, B.~Yu, W.~Li, Y.~Gao, A survey on federated learning,
  Knowledge-Based Systems 216 (2021) 106775.

\bibitem{zheng22}
Y.~Zheng, S.~Lai, Y.~Liu, X.~Yuan, X.~Yi, C.~Wang, Aggregation service for
  federated learning: An efficient, secure, and more resilient realization,
  IEEE Transactions on Dependable and Secure Computing (2022).
\newblock \href {https://doi.org/10.1109/TDSC.2022.3146448}
  {\path{doi:10.1109/TDSC.2022.3146448}}.

\bibitem{gao2020end}
Y.~Gao, M.~Kim, S.~Abuadbba, Y.~Kim, C.~Thapa, K.~Kim, S.~A. Camtep, H.~Kim,
  S.~Nepal, End-to-end evaluation of federated learning and split learning for
  internet of things, in: Proc. of IEEE SRDS, 2020.

\bibitem{xu2021federated}
J.~Xu, B.~S. Glicksberg, C.~Su, P.~Walker, J.~Bian, F.~Wang, Federated learning
  for healthcare informatics, Journal of Healthcare Informatics Research 5~(1)
  (2021) 1--19.

\bibitem{li2021survey}
Q.~Li, Z.~Wen, Z.~Wu, S.~Hu, N.~Wang, Y.~Li, X.~Liu, B.~He, A survey on
  federated learning systems: vision, hype and reality for data privacy and
  protection, IEEE Transactions on Knowledge and Data Engineering (2021).

\bibitem{long2020federated}
G.~Long, Y.~Tan, J.~Jiang, C.~Zhang, Federated learning for open banking, in:
  Federated learning, Springer, 2020, pp. 240--254.

\bibitem{yang2019federated}
Q.~Yang, Y.~Liu, T.~Chen, Y.~Tong, Federated machine learning: Concept and
  applications, ACM Transactions on Intelligent Systems and Technology 10~(2)
  (2019) 1--19.

\bibitem{nasr2019comprehensive}
M.~Nasr, R.~Shokri, A.~Houmansadr, Comprehensive privacy analysis of deep
  learning: Passive and active white-box inference attacks against centralized
  and federated learning, in: Proc. of IEEE S\&P, 2019.

\bibitem{perifanis2022federated}
V.~Perifanis, P.~S. Efraimidis, Federated neural collaborative filtering,
  Knowledge-Based Systems (2022) 108441.

\bibitem{wang2020optimizing}
H.~Wang, Z.~Kaplan, D.~Niu, B.~Li, Optimizing federated learning on non-iid
  data with reinforcement learning, in: Proc. of IEEE INFOCOM, 2020.

\bibitem{koren2009matrix}
Y.~Koren, R.~Bell, C.~Volinsky, Matrix factorization techniques for recommender
  systems, Computer 42~(8) (2009) 30--37.

\bibitem{yu2017attributes}
Y.~Yu, C.~Wang, H.~Wang, Y.~Gao, Attributes coupling based matrix factorization
  for item recommendation, Applied Intelligence 46~(3) (2017) 521--533.

\bibitem{zhang2019probabilistic}
S.~Zhang, L.~Liu, Z.~Chen, H.~Zhong, Probabilistic matrix factorization with
  personalized differential privacy, Knowledge-Based Systems 183 (2019) 104864.

\bibitem{yang2021fcmf}
E.~Yang, Y.~Huang, F.~Liang, W.~Pan, Z.~Ming, Fcmf: Federated collective matrix
  factorization for heterogeneous collaborative filtering, Knowledge-Based
  Systems 220 (2021) 106946.

\bibitem{kosinski2013private}
M.~Kosinski, D.~Stillwell, T.~Graepel, Private traits and attributes are
  predictable from digital records of human behavior, Proceedings of the
  National Academy of Sciences 110~(15) (2013) 5802--5805.

\bibitem{berlioz2015applying}
A.~Berlioz, A.~Friedman, M.~A. Kaafar, R.~Boreli, S.~Berkovsky, Applying
  differential privacy to matrix factorization, in: Proc. of ACM RecSys, 2015.

\bibitem{DworkMNS06}
C.~Dwork, F.~McSherry, K.~Nissim, A.~D. Smith, Calibrating noise to sensitivity
  in private data analysis, in: Proc. of TCC, 2006.

\bibitem{nikolaenko2013privacy}
V.~Nikolaenko, S.~Ioannidis, U.~Weinsberg, M.~Joye, N.~Taft, D.~Boneh,
  Privacy-preserving matrix factorization, in: Proc. of ACM CCS, 2013.

\bibitem{kim2016efficient}
S.~Kim, J.~Kim, D.~Koo, Y.~Kim, H.~Yoon, J.~Shin, Efficient privacy-preserving
  matrix factorization via fully homomorphic encryption, in: Proc. of ACM
  AsiaCCS, 2016.

\bibitem{bellare2012foundations}
M.~Bellare, V.~T. Hoang, P.~Rogaway, Foundations of garbled circuits, in: Proc.
  of ACM CCS, 2012.

\bibitem{chai2020secure}
D.~Chai, L.~Wang, K.~Chen, Q.~Yang, Secure federated matrix factorization, IEEE
  Intelligent Systems (2020).

\bibitem{takacs2008investigation}
G.~Tak{\'a}cs, I.~Pil{\'a}szy, B.~N{\'e}meth, D.~Tikk, Investigation of various
  matrix factorization methods for large recommender systems, in: Proc. of IEEE
  ICDM Workshops, 2008.

\bibitem{gemulla2011large}
R.~Gemulla, E.~Nijkamp, P.~J. Haas, Y.~Sismanis, Large-scale matrix
  factorization with distributed stochastic gradient descent, in: Proc. of ACM
  SIGKDD, 2011.

\bibitem{bellare1994incremental}
M.~Bellare, O.~Goldreich, S.~Goldwasser, Incremental cryptography: The case of
  hashing and signing, in: Proc. of CRYPTO, 1994.

\bibitem{damgaard1998commitment}
I.~Damg{\aa}rd, Commitment schemes and zero-knowledge protocols, in: School
  organized by the European Educational Forum, Springer, 1998, pp. 63--86.

\bibitem{bonawitz2017practical}
K.~Bonawitz, V.~Ivanov, B.~Kreuter, A.~Marcedone, H.~B. McMahan, S.~Patel,
  D.~Ramage, A.~Segal, K.~Seth, Practical secure aggregation for
  privacy-preserving machine learning, in: Proc. of ACM CCS, 2017.

\bibitem{guo2020v}
X.~Guo, Z.~Liu, J.~Li, J.~Gao, B.~Hou, C.~Dong, T.~Baker, Verifl:
  Communication-efficient and fast verifiable aggregation for federated
  learning, IEEE Transactions on Information Forensics and Security 16 (2020)
  1736--1751.

\bibitem{ChaudhariRS20}
H.~Chaudhari, R.~Rachuri, A.~Suresh, Trident: Efficient 4pc framework for
  privacy preserving machine learning, in: Proc. of NDSS, 2020.

\bibitem{Eskandarian22}
S.~Eskandarian, D.~Boneh, Clarion: Anonymous communication from multiparty
  shuffling protocols, in: Proc. of NDSS, 2022.

\bibitem{goldreich2009foundations}
O.~Goldreich, Foundations of Cryptography: Volume 1, Cambridge university
  press, 2009.

\bibitem{Lindell17}
Y.~Lindell, How to simulate it - {A} tutorial on the simulation proof
  technique, in: Y.~Lindell (Ed.), Tutorials on the Foundations of
  Cryptography, Springer International Publishing, 2017, pp. 277--346.

\bibitem{harper2015movielens}
F.~M. Harper, J.~A. Konstan, The movielens datasets: History and context, ACM
  Transactions on Interactive Intelligent Systems 5~(4) (2015) 1--19.

\bibitem{WangRWW13}
C.~Wang, K.~Ren, J.~Wang, Q.~Wang, Harnessing the cloud for securely
  outsourcing large-scale systems of linear equations, {IEEE} Transactions on
  Parallel and Distributed Systems 24~(6) (2013) 1172--1181.

\bibitem{ZhengDW18}
Y.~Zheng, H.~Duan, C.~Wang, Learning the truth privately and confidently:
  Encrypted confidence-aware truth discovery in mobile crowdsensing, {IEEE}
  Transactions on Information Forensics and Security 13~(10) (2018) 2475--2489.

\end{thebibliography}

\end{document}